\title{Inferring Causality from Time Series data based on Structural Causal Model and its application to Neural Connectomics}
\author[1,*]{Rahul Biswas}
\author[2]{SuryaNarayana Sripada}
\author[3]{Somabha Mukherjee}
\affil[1]{Department of Electrical and Computer Engineering, University of Washington, Seattle, 98195, USA.}
\affil[2]{Center for Research on Science and Consciousness, Redmond, 98052, USA.}
\affil[3]{Department of Statistics and Data Science, National University of Singapore, 117546, Singapore.}
\affil[*]{rbiswas1@uw.edu}
\DeclarePairedDelimiter\floor{\lfloor}{\rfloor}
\newcommand{\ind}{\perp\!\!\!\!\perp}
\newcommand{\e}{\mathbb{E}}
\theoremstyle{definition}
\newtheorem{corollary}{Corollary}[section]
\newcommand\numberthis{\addtocounter{equation}{1}\tag{\theequation}}
\newtheorem{theorem}{Theorem}[section]
\newtheorem{lemma}{Lemma}[section]
 \def\@textbottom{\vskip \z@ \@plus 1pt}
 \let\@texttop\relax
\newtheorem{definition}{Definition}[section]
\begin{abstract}
Inferring causation from time series data is of scientific interest in different disciplines, particularly in neural connectomics. While different approaches exist in the literature with parametric modeling assumptions, we focus on a non-parametric model for time series satisfying a Markovian structural causal model with stationary distribution and without concurrent effects. We show that the model structure can be used to its advantage to obtain an elegant algorithm for causal inference from time series based on conditional dependence tests, coined Causal Inference in Time Series (CITS) algorithm. We describe Pearson's partial correlation and Hilbert-Schmidt criterion as candidates for such conditional dependence tests that can be used in CITS for the Gaussian and non-Gaussian settings, respectively. We prove the mathematical guarantee of the CITS algorithm in recovering the true causal graph, under standard mixing conditions on the underlying time series. We also conduct a comparative evaluation of performance of CITS with other existing methodologies in simulated datasets. We then describe the utlity of the methodology in neural connectomics -- in inferring causal functional connectivity from time series of neural activity, and demonstrate its application to a real neurobiological dataset of electro-physiological recordings from the mouse visual cortex recorded by Neuropixel probes.
\end{abstract}
\begin{document}
\maketitle

\section{Introduction}
Inferring causality in time series data is of scientific interest in different disciplines such as neuroscience \citep{reid2019advancing,biswas2022statistical1}, econometrics \citep{gokmenoglu2019time,kwon2011graphical}, climatology \citep{barbero2018temperature} and earth sciences \citep{massmann2021causal}, etc. In neuroscience, it is important to map the causal pathway of neural interaction during brain function. However, the causal neural pathways are not observed and needs to be inferred from recorded neural time series data. The neural interaction pathways are often represented by a directed graph between the neurons termed as the causal functional connectome (CFC) \citep{reid2019advancing, biswas2022statistical1, biswas2022statistical2}, which has an edge from neuron $u\rightarrow v$ if the activity of neuron $u$ at time $t$ has a causal influence on the activity of neuron $v$ at a later time $t'$. Finding the CFC is expected to provide a more fundamental understanding of brain function and dysfunction \citep{hassabis2017neuroscience}. Indeed, finding the CFC would facilitate the inference of the governing causal neuronal interaction pathways essential for brain functioning and behavior \citep{finn2015functional,biswas2022statistical1}. Furthermore, the CFC can be a promising biomarker for neuro-psychiatric diseases. For example, abnormal resting-state functional connectivity between brain regions is also known to predate brain atrophy and the emergence of symptoms of Alzheimer's disease by up to two decades or more \citep{ashraf2015cortical,nakamura2017early, brier2014functional, sheline2013resting}. In other disciplines, such as climate studies, it is of interest to know whether the change in certain climatic variables (e.g. river run-off at a station on a day) has a causal effect on other climatic variables at a later time (e.g. river run-off at a different station on a later day) \citep{molina2017causal,biswas2022statistical2, miersch2023identifying}. In econometrics, it is of interest to know whether the change in one macro-economic variable (e.g. real gross domestic product) at a certain month influences another macro-economic variable (e.g. borrowed reserves) at a later month \citep{geweke1984inference, dahlhaus2003causality, eichler2010granger}.

Methods for inferring causality from time series data often use a parametric model which assumes specific dynamical equations governing the time series variables, such as consisting of a linear vector autoregressive model in popular implementations of Granger Causality \citep{barnett2009granger,granger2001essays, shojaie2022granger}, or non-linear and additive time series model \citep{chu2008search, zaremba2022statistical, nishiyama2011consistent}. In contrast, non-parametric methods do not assume any specific form of the dynamical equations governing the time series variables and thus are widely applicable. Popular non-parametric methods are based on the directed graphical model such as the recently proposed Time-aware PC (TPC) algorithm \citep{biswas2022statistical2, spirtes2000causation}, and the structural causal model (SCM) \citep{peters2013causal,du2007structural}. See Section \ref{sec: cinf_ts_review} for a more detailed review of approaches for causal inference in time series. 

The SCM framework offers researchers with several advantages: 1) it is a graphical approach that is both interpretable and straightforward in verifying causal relationships of interest; 2) the outcomes obtained through the application of SCM do not depend on the specific distributional or functional assumptions that are often assumed in the literature on time series data, such as linear relationships or multivariate normality; 3) the SCM adheres to the Neyman-Rubin causal model \citep{holland1986statistics,rosenbaum1983central, rubin1978bayesian} and earlier works on linear structural equations and causality \citep{hood1953studies,haavelmo1943statistical}. Yet, there has been less attention to leveraging a Markovian condition for the time series causal models. Nevertheless, the Markovian assumption finds wide practical use in time series originating in different disciplines, such as in neuroscience and econometrics \citep{korda2016discrete,pham2017texture, dias2015clustering}. An advantage of the Markovian setting is that the causal information would be present in an interval preceding the current time, which is applicable in practice.  Thereby, in this paper we focus on the non-parametric SCM for time series which is Markovian of an arbitrary but finite order, having a stationary probability distribution and devoid of concurrent effects \citep{peters2017elements,li2017nonparametric}.

It is important to note that causality in time series is expected to follow the arrow of time, i.e. ``causes temporally precede their effects" \citep{eichler2013causal,kleinberg2013causality,lowe2022amortized}. The smallest unit of time that can be recorded by the system is called time granularity \citep{allen1983maintaining,euzenat2005time}. If the time interval between the cause and effect is less than the time granularity, then the cause and effect will appear to be erroneously instantaneous. For many domains of application, selection of proper time granularity is sufficient to prevent such artificially instantaneous causal relationships \citep{wei1982comment, weichwald_jcn}. Even in large databases, representation of time and time granularity are given importance \citep{etzion1998temporal, kulkarni2012temporal, date2014time}. Furthermore, if the interval between consecutive samples of the time series is larger than the time interval between the cause and effect, then the cause and effect can appear to be erroneously concurrent \citep{breitung2002temporal,gong2017causal}. This kind of artificial concurrency can be resolved by more frequent sampling. For instance, in neuroscience, Neuropixels is a recent and popular technology for rapid recording of neural signals in animal models \citep{steinmetz2021neuropixels,steinmetz2018challenges}. Neuropixels record the activity of individual neurons at 30 kHz sampling rate, yielding a fine time granularity of one recording per 0.03 ms. In comparison, transmission of signal from one neuron to another typically happens via neurotransmitters in the inter-neuronal synapse, which has a delay of about 0.5-1 ms for adjacent neurons and longer for non-adjacent neurons \citep{camara2015bio,gabbiani2017mathematics}, precluding artificial concurrency of causal relationships between neurons based on the Neuropixels recordings. In contrast, blood oxygen level dependent (BOLD) signal obtained by functional magnetic resonance imaging (fMRI) is expected to include artificially concurrent causal relationships between regions of interest, since the transformation of neural activity to haemodynamic response in BOLD signal temporally aggregates neural activity over several seconds, while the time lag of causal neural interactions is of the order of milliseconds \citep{buxton1998dynamics,glover1999deconvolution,bush2013decoding}.

While the recently proposed TPC algorithm computes causality from time-series data where concurrent causal relationships may be present \citep{biswas2022statistical2}, in this paper, we present a method to compute causality with greater efficacy from time series data which does not involve concurrent causality. Such a greater efficacy is due to the following: unlike in TPC, the Directed Markov Property of the time series windows is no longer an assumption in this paper, and instead, the same property for the entire time series follows naturally from the more interpretable time series structural causal model \eqref{eqdef: process}. While consistent causal inference by the TPC algorithm was demonstrated in \citep{biswas2022consistent} for popular time series models such as vector auto-regressive moving average and linear processes, the framework of the current paper enables us to prove consistent causal inference under the more general model \eqref{eqdef: process} on the underlying time series, permitting arbitrary functional relationships between causal antecedents and precedents in the time series.

For a Markovian time series of order $\tau$, the direct causes of the variables at a time $t$ can be variables at times not preceding $t-\tau$, and we show that the conditional independence information in a $2\tau$ window before time $t$ can be used to detect the direct causes of variables at time $t$. We then use this property to our advantage to propose an elegant algorithm for causal inference in this time series setting, coined the Causal Inference in Time Series (CITS) algorithm. We show the mathematical guarantee of the algorithm in yielding the correct DAG under standard mixing conditions on the underlying time series. Tests based on partial correlation and the Hilbert-Schmidt independence criterion are candidates for such statistical tests for conditional dependence in the time series setting \citep{biswas2022consistent}. We demonstrate the performance of the algorithm in simulated datasets. We then describe the utility of the methodology in neural connectomics to infer causal functional connectivity between neurons from time series of neural activity. We apply the methodology to a dataset of neural signals recorded by Neuropixels to obtain the causal functional connectivity between neurons in the mouse visual cortex.

\section{Results}
\subsection{Time series structural causal model}\label{sec:tsscm}
In the time series setting we consider, the data consists of a finite realization of a strictly stationary multivariate Markovian process $\{\mathbf{X}_t\}_{t\in \mathbb{Z}}$ of order $\tau$ and having $p$ components, i.e. $\mathbf{X}_t = (X_{1,t},\ldots,X_{p,t})$ for every $t\in \mathbb{Z}$. The number of components $p$ is arbitrary but fixed. There can be serial dependence in the sequence between variables within the same as well as between different components. We also assume that the stochastic process satisfies a structural causal model (SCM), which remains invariant across time $t$. The SCM consists of a collection of assignments: \begin{equation}\label{eqdef: process}X_{v,t}=f_{v}(X_{\text{pa}(v,t)},\epsilon_{v,t}),~ v=1,\ldots,p, t\in \mathbb{Z},\end{equation}
where, $\text{pa}(v,t)\subseteq \{(d,s): d=1,\ldots,p; s=t-1,\ldots,t-\tau\}$, $\epsilon_{v,t}, v=1,\ldots,p, t\in \mathbb{Z}$ are jointly independent, and, for $S\subseteq \{1,\ldots,p\}\times \mathbb{Z}$, $X_S:= (X_{\bm{i}})_{\bm{i}\in S}$. 

By definition \eqref{eqdef: process}, we are not allowing concurrent effects (see \citep{heise1970causal,eichler2013causal,li2017nonparametric}), i.e. $(u,t)\not\in\text{pa}(v,t)$, for $u,v=1,\ldots,p, t\in \mathbb{Z}$.

The graph $G$ of the SCM is obtained by creating one vertex for each $(v,t),v=1,\ldots,p; t=1,\ldots,\tau+1$ and drawing directed edges from each parent in $\text{pa}(v,\tau+1)$ to $(v,\tau+1)$ for $v=1,\ldots,p$. Sometimes, by a slight misuse of notation, we will refer to a vertex $(v,t)$ by $X_{v,t}$. We assume that this graph is acyclic, and then $G$ is referred to as a Directed Acyclic Graph (DAG). An example is given in Fig \ref{fig:graphexample} (left).

\begin{figure}
    \centering
    \includegraphics[width = 0.5\textwidth]{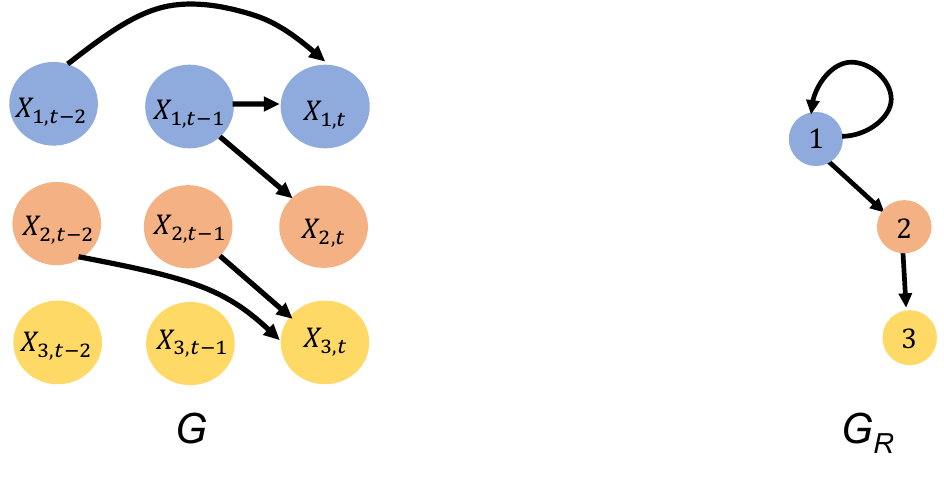}
    \caption{An example of the time-invariant DAG $G$ (left) and its Rolled graph $G_R$ (right) (See Definition \ref{def:rolledgraph}) corresponding to a stationary Markovian time series of order 2 satisfying the SCM: $X_{1,t} = X_{1,t-1} + 0.5 X_{1,t-2} + \epsilon_{1,t}, ~ X_{2,t} = \sin(X_{1,t-1}) + \epsilon_{2,t}$ and $X_{3,t} = \sqrt{\vert X_{2,t-2} + X_{2,t-1}\vert} + \epsilon_{3,t}$.}
    \label{fig:graphexample}
\end{figure}

Due to stationarity, the DAG $G$ of the SCM is invariant over time $t$, that is, for any time $t\geq \tau+1$, the DAG of the edges from $X_{\text{pa}(v,t)}$ to $X_{v,t}$ for $v=1,\ldots,p$ is $G$. Due to the Markovian structure, $\text{pa}(v,t)$ can have nodes at time points $t,t-1,\ldots,t-\tau$ and no earlier. Furthermore, due to no concurrent effects, time $t$ is excluded from $\text{pa}(v,t)$, since the edges are directed forward in time, and there are no directed edges across the different variables at the same time point (See Fig \ref{fig:graphexample}). We also assume that the SCM satisfies faithfulness. Then, the DAG $G$ can be identified up to its Markov Equivalence Class from the distribution of the stochastic process. In fact, in this scenario, the DAG $G$ is identified uniquely by virtue of temporal order, since all graphs in the Markov Equivalence Class have the same skeleton among which the DAG $G$ is the only member that obeys the time order.

The DAG $G$ with nodes corresponding to variable $v$ at time $t$ and edges representing the causal influence of variable $u$ at time $t_1$ to variable $v$ at time $t_2$ is also referred to as the \textit{unrolled DAG} for the time series. The unrolled DAG for the time series captures the across-time causal relationships between the variables over time. The unrolled DAG is often summarized by a \textit{rolled graph} with nodes as the variables and an edge $u\rightarrow v$ for variables $u, v$ if variable $u$ at some time $t$ has a causal effect on variable $v$ at a later time $t'$ (see Figure \ref{fig:graphexample}) \citep{peters2013causal,biswas2022statistical2}. The rolled graph obtained from $G$ summarizes the interaction of the variables over time. Therefore, an alternative name used for rolled graph in literature is \textit{summary causal graph} for time series \citep{peters2013causal,assaad2022entropy}. In the context of neural connectomics, the rolled graph between neurons obtained from neural time series represents the CFC between the neurons since the CFC has an edge from neuron $u\rightarrow v$ if neuron $u$ at some time $t$ has a causal influence on neuron $v$ at a later time $t'$ \citep{reid2019advancing,valdes2011effective,biswas2022statistical2}.

\begin{definition}\label{def:rolledgraph}
The Rolled Graph of $G$, denoted by $G_R$, is defined as a graph with nodes $1,\ldots,p$ which has an edge $u\rightarrow v$ if $(u,t_1)\rightarrow (v,t_2)$ is an edge in the unrolled DAG $G$ for some $t_1\leq t_2$. 
\end{definition}

Since we assume no concurrent effect, it suffices to look for edges in $G$ of the form $(u,t_1)\rightarrow (v,t_2)$ for some $t_1< t_2$. Furthermore, due to stationarity and Markovian property, it suffices to look for edges incident on an arbitrary fixed time point $t$ and originating from $t-\tau,\ldots,t-1$. Thereby, $G_R$ has an edge $u\rightarrow v$ if and only if $X_{u,s}\rightarrow X_{v,2\tau+1}$ for some $s=\tau+1,\ldots,2\tau$. In the context of neuroscience, the Rolled Graph $G_R$ is termed as causal functional connectivity between neurons.

\subsection{Model properties facilitating causal inference}\label{sec:properties}
The goal of this paper is to estimate $G$ and thereafter $G_R$ in a non-parametric manner without relying on particular model specifications for the underlying time series. We illustrate the goal with a simple example of a stationary VAR(p)-model: $X_{v,t} = \sum_{u=1}^p \sum_{j=1}^{\tau} \phi_{uv}^{(j)} X_{u,t-j}+\epsilon_{v,t}$, where $\tau$ is the order of the Markovian process, and the noise variables $\epsilon_{v,t}$ are i.i.d. with mean zero and $\epsilon_{1,t},\ldots,\epsilon_{p,t}$ are independent of $\{X_{u,s}:u=1,\ldots,p, s<t\}$. Then, the entry of the adjacency matrix of $G$ corresponding to the edge $(u,t-j)\rightarrow (v,t)$ in $G$ is, \begin{equation}\label{eq:var_adj}\mathbf{1}(\phi_{uv}^{(j)}\neq 0).\end{equation}

In this scenario, the weights $\phi_{uv}^{(j)}$ can be estimated by a Likelihood Ratio (LR) test assuming Gaussian distributed noise terms, and plugging them into \eqref{eq:var_adj}, one can estimate the adjacency matrix of $G$ and therefore $G$. This approach of estimating the weights in the VAR model and using them to estimate $G$ is essentially the \textit{Granger Causality}. However, if the noise distribution is unknown then such an LR test is difficult to be conducted. Furthermore, if the true underlying data generating stationary process is nonlinear, perhaps with non-additive innovation terms, it is non-trivial to extend this approach. 

Instead, we use a novel approach based on conditional independence (CI) which is motivated from other CI-based methods such as the SGS \citep{glymour1991causal} and PC \citep{kalisch2007estimating} algorithms, and utilizes the time order, Markovian and stationary distribution to its advantage to infer the DAG in the time series setting. The adjacency of edges in $G$ can be determined from a conditional independence oracle because of the following: $X_{v,t}$ and $X_{u,s}$ are conditionally independent given a set of other nodes if and only if $X_{v,t}$ and $X_{u,s}$ are d-separated in $G$ by that conditioning set of nodes under faithfulness, which once again holds if and only if $X_{v,t}$ and $X_{u,s}$ are non-adjacent \citep{verma2022equivalence}. Furthermore, $X_{v,t}$ and $X_{u,s}$ are non-adjacent if and only if $X_{v,t}$ and $X_{u,s}$ are d-separated by their parents \citep{verma2022equivalence}. It is noteworthy that the parents of $X_{v,t}$ and $X_{u,s}$ are within the interval $\{t\wedge s-\tau, \ldots, t\vee s-1\}$ since the process is Markovian for order $\tau$. Therefore, $X_{v,t}$ and $X_{u,s}$ are conditionally independent given a set of other nodes in the interval $\{t\wedge s-\tau, \ldots, t\vee s-1\}$ if and only if $X_{v,t}$ and $X_{u,s}$ are non-adjacent.

This illustrates that we can relate the adjacency of a pair of nodes $X_{v,t}$ and $X_{u,s}$ in $G$ to conditional independence information of the pair of nodes given a set of other nodes in the interval $\{t\wedge s-\tau, \ldots, t\vee s-1\}$. This is formalized by the following lemma, proved in Appendix \ref{proofLem3.1}.

\begin{lemma}\label{lemma:concept}
    For $u,v=1,\ldots,p$ and $t\in \mathbb{Z}$, $s\in \{t-\tau,\ldots,t-1\}$, the following statements are equivalent.
    \begin{enumerate}
        \item [(1)] $X_{u,s}\not\in \text{pa}(v,t)$.
        \item [(2)] $X_{v,t}$ and $X_{u,s}$ are non-adjacent in $G$.
        \item [(3)] $X_{v,t}\ind X_{u,s} \vert \bm{X}_{S}$ 
    for some $S\subseteq \{(d,r):d=1,\ldots,p; r= t-2\tau,\ldots,t-1\}$ and $\bm{X}_{S} = \{X_{d,r}: (d,r)\in S\}$.
    \end{enumerate}
\end{lemma}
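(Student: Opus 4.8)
The plan is to establish $(1)\Leftrightarrow(2)$ by inspection of how $G$ is built, then $(2)\Rightarrow(3)$ from the structural assignments \eqref{eqdef: process}, and finally $(3)\Rightarrow(1)$ by contraposition using faithfulness, which closes the loop. For $(1)\Leftrightarrow(2)$: since $s\in\{t-\tau,\dots,t-1\}$ we have $s<t$, and by \eqref{eqdef: process} the set $\text{pa}(u,s)$ consists only of vertices at times $s-1,\dots,s-\tau$, all strictly earlier than $t$, so $(v,t)\notin\text{pa}(u,s)$ and there is no edge $X_{v,t}\to X_{u,s}$. Hence the only candidate edge between the two vertices is $X_{u,s}\to X_{v,t}$, which by construction of $G$ is present exactly when $X_{u,s}\in\text{pa}(v,t)$; so non-adjacency in $G$ is equivalent to $X_{u,s}\notin\text{pa}(v,t)$.

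For $(2)\Rightarrow(3)$ I would take the separating set to be $S=\text{pa}(v,t)$ itself. Admissibility is immediate: the finite Markov order places $S\subseteq\{(d,r):r=t-\tau,\dots,t-1\}\subseteq\{(d,r):r=t-2\tau,\dots,t-1\}$, and $S$ contains neither $(v,t)$ (its time index is $t$) nor $(u,s)$ (by $(1)$). For the conditional independence $X_{v,t}\ind X_{u,s}\mid\bm X_S$, I would unroll \eqref{eqdef: process} so that each coordinate $X_{d,r}$ becomes a measurable function of the innovations $\{\epsilon_{e,q}:q\le r\}$; then $X_{u,s}$ and every coordinate of $\bm X_S$, all living at times $\le t-1$, are functions of $\{\epsilon_{e,q}:q\le t-1\}$, which is independent of $\epsilon_{v,t}$ by joint independence of the noise. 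Since $X_{v,t}=f_v(\bm X_S,\epsilon_{v,t})$, conditioning on $\bm X_S$ leaves $X_{v,t}$ a function of the ``fresh'' noise $\epsilon_{v,t}$ alone, and $\epsilon_{v,t}\ind(\bm X_S,X_{u,s})$ yields the claim. (This is just the local Markov property of the SCM relative to $G$, with $X_{u,s}$ a non-descendant of $X_{v,t}$ because all descendants of $X_{v,t}$ lie at times $>t>s$.)

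For $(3)\Rightarrow(1)$ I would argue the contrapositive. If $X_{u,s}\in\text{pa}(v,t)$, then $X_{u,s}\to X_{v,t}$ is an edge of $G$; for any admissible $S$ we have $(v,t)\notin S$, and---reading $(3)$ under the usual convention that the conditioning set avoids the two tested vertices---$(u,s)\notin S$ as well, so the length-one path $X_{u,s}\to X_{v,t}$ contains no node that could be blocked and $X_{v,t},X_{u,s}$ are $d$-connected given $\bm X_S$. Faithfulness then upgrades $d$-connection to statistical dependence, so $X_{v,t}\not\ind X_{u,s}\mid\bm X_S$ for every admissible $S$, i.e.\ $(3)$ fails.

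I expect the one genuinely delicate point to be the measurability step inside $(2)\Rightarrow(3)$: one must verify that the stationary solution of \eqref{eqdef: process} is adapted to the past of the innovation process---that unrolling the recursion really expresses $X_{d,r}$ in terms of $\{\epsilon_{e,q}:q\le r\}$ only---so that $\bm X_S$ and $X_{u,s}$ are jointly independent of $\epsilon_{v,t}$; this is where strict stationarity together with the finite Markov order is doing the work. Alternatively, one may simply cite that the law of an SCM is Markov with respect to its DAG and invoke the local Markov property directly. The remaining work, namely confirming that the relevant time indices all stay inside the $2\tau$-window $\{(d,r):r=t-2\tau,\dots,t-1\}$, is routine bookkeeping.
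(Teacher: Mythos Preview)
Your proof is correct and follows the same logical skeleton as the paper's: $(1)\Leftrightarrow(2)$ by inspection of the edge structure, $(2)\Rightarrow(3)$ by exhibiting a separating set and appealing to the Markov property of the SCM, and the remaining implication via faithfulness. The one substantive difference is your choice of separating set in $(2)\Rightarrow(3)$: you take $S=\text{pa}(v,t)$ and invoke the local Markov property (a node is independent of its non-descendants given its parents, and $X_{u,s}$ is a non-descendant since $s<t$), whereas the paper takes the larger $S_0=\text{pa}(v,t)\cup\text{pa}(u,s)$ and cites the general $d$-separation fact that two non-adjacent vertices are separated by the union of their parents. Both work; your $S$ in fact lies in the narrower window $\{t-\tau,\dots,t-1\}$, so your argument incidentally shows the $2\tau$-window in (3) is slack for the forward direction, while the paper's choice is what makes the $2\tau$-window appear naturally. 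Your caveat about adaptedness of the stationary solution to the innovation filtration is well placed; the paper sidesteps this by citing the directed Markov property of SCMs as a black box, which is exactly the alternative you propose.
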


We recall that all elements of $\text{pa}(v,t)$ are within times $t-\tau,\ldots,t-1$,since the time series is Markovian of order $\tau$. Therefore it follows from Lemma \ref{lemma:concept} that elements of $\text{pa}(v,t)$ are $X_{u,s}$ for $u=1,\ldots,p; s\in \{t-\tau,\ldots,t-1\}$ such that \begin{equation}\label{eq:prop_adj}X_{u,s}\not\ind X_{v,t} \vert \bm{X}_{S} \text{ for some } S\subseteq \{(d,r):d=1,\ldots,p; r= t-2\tau,\ldots,t-1\}.\end{equation}

We use this key property to formulate the Causal Inference in Time Series (CITS) algorithm to estimate $G$ and $G_R$ in the following section and show its mathematical guarantee in estimation later in the paper. %We \textcolor{magenta}{later} show that when having no concurrent effects, assuming regularity conditions, the CITS algorithm achieves \textcolor{magenta}{\sout{the convergence rate for } consistency in} estimating $G$ and thereby $G_R$. \textcolor{magenta}{\sout{The following lemma is proved in Appendix \ref{proofLem3.1}.}}

%We use this key property to estimate $G$ by estimating specific conditional dependencies through statistical tests. We then use time order to direct the estimated skeleton since edges in $G$ can be directed by time order \textcolor{magenta}{being devoid of concurrent effects} \textcolor{green}{last part not written correctly}. This is put together in the Causal Inference in Time Series (CITS) algorithm in the following section. We \textcolor{magenta}{later} show that when having no concurrent effects, assuming regularity conditions, the CITS algorithm achieves \textcolor{magenta}{\sout{the convergence rate for } consistency in} estimating $G$ and thereby $G_R$. \textcolor{magenta}{\sout{The following lemma is proved in Appendix \ref{proofLem3.1}.}}

\subsection{Inference from data: The CITS algorithm} 
\subsubsection{Oracle version: With a Conditional Dependence Oracle}
The properties described earlier (see Section \ref{sec:properties}, Lemma \ref{lemma:concept}) provide us a strategy to obtain $\text{pa}(v,t)$ by two levels of searching: 1. searching for $X_{u,s}$, $u=1,\ldots,p; s\in \{t-\tau,\ldots,t-1\}$ that satisfy property \eqref{eq:prop_adj}, and 2. searching for subsets $S$ of $\{(d,r):d=1,\ldots,p; r= t-2\tau,\ldots,t-1\}$ to determine if property \eqref{eq:prop_adj} is satisfied for at least one such subset. Note that to verify property \eqref{eq:prop_adj}, one can stop the search when one such subset $S$ is found, whereas to conclude that property \eqref{eq:prop_adj} is not satisfied, one has to check all such subsets. In particular, using this we can obtain whether $X_{v,t}$ and $X_{u,s}$ are adjacent and thereby for $s\in \{t-\tau,\ldots,t-1\}$, $X_{u,s}\rightarrow X_{v,t}$ i.e. $X_{u,s}\in \text{pa}(v,t)$ by Lemma \ref{lemma:concept}. In particular, we can obtain $\text{pa}(v,2\tau+1)$ for all $v=1,\ldots,p$. Next, $G$ is obtained by assigning edges from $\text{pa}(v,2\tau+1)$ to $X_{v,2\tau+1}, ~ v=1,\ldots,p$, and $G_R$ is the Rolled Graph of $G$ with nodes $1,\ldots, p$ and edge $u\rightarrow v$ for $u,v=1,\ldots,p$ if $X_{u,s}\in pa(v,2\tau+1)$ for some $s=\tau+1,\ldots,2\tau$. These steps constitute the Time-series Structural Causal Inference algorithm oracle version (CITS-oracle). The oracle has knowledge of conditional independence information between the variables. The steps are clearly outlined in Algorithm \ref{alg:CITS_oracle}. The following theorem establishes that the CITS-Oracle outputs the true DAG for the stochastic process and its Rolled Graph.

\begin{theorem}\label{thm:CITS_oracle}
    Let $\{\bm{X}_t\}_{t\in \mathbb{Z}}$ be a strictly stationary Markovian process of order $\tau$, and assume that it can be uniquely represented in the form of a structural causal model with time invariant DAG $G$ as in \eqref{eqdef: process}, which exhibits no concurrent effects and satisfies faithfulness. Then, the CITS-Oracle algorithm outputs the true DAG $G$ and its Rolled Graph $G_R$.
\end{theorem}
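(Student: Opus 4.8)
The plan is to show that CITS-Oracle, run with reference time $t = 2\tau+1$, recovers each parent set $\text{pa}(v,2\tau+1)$ exactly, and then to read off $G$ and $G_R$ from these parent sets using stationarity and Definition~\ref{def:rolledgraph}. The whole argument is an application of Lemma~\ref{lemma:concept} at $t = 2\tau+1$, together with the order-$\tau$ Markov restriction on where parents can live.

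First I would make precise what Algorithm~\ref{alg:CITS_oracle} does. For fixed $v$ and each candidate $X_{u,s}$ with $u \in \{1,\ldots,p\}$ and $s \in \{\tau+1,\ldots,2\tau\}$, it queries the conditional-independence oracle to decide whether some $S \subseteq \{(d,r): d=1,\ldots,p,\ r=1,\ldots,2\tau\}$ with $(u,s) \notin S$ satisfies $X_{v,2\tau+1} \ind X_{u,s} \mid \bm{X}_S$, stopping as soon as such a separating set appears and, if none is found after the exhaustive (finite) search, declaring $X_{u,s}$ to lie in the estimated parent set of $X_{v,2\tau+1}$. Because $t = 2\tau+1 \ge \tau+1$, so that $\{t-\tau,\ldots,t-1\}=\{\tau+1,\ldots,2\tau\}$ and $\{t-2\tau,\ldots,t-1\}=\{1,\ldots,2\tau\}$, this is exactly the dichotomy separating statement~(3) of Lemma~\ref{lemma:concept} from its negation; hence, by the equivalence (1)$\,\Leftrightarrow\,$(3), the estimated parent set equals $\text{pa}(v,2\tau+1)$, while (1)$\,\Leftrightarrow\,$(2) identifies ``$X_{u,s}\in\text{pa}(v,2\tau+1)$'' with ``$X_{u,s}$ and $X_{v,2\tau+1}$ adjacent'', the edge being necessarily oriented $X_{u,s}\to X_{v,2\tau+1}$ since $s<2\tau+1$. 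I would stress here that no true parent is missed by the restriction $s\in\{\tau+1,\ldots,2\tau\}$, since order-$\tau$ Markovianity forces $\text{pa}(v,2\tau+1)\subseteq\{(d,r):r=\tau+1,\ldots,2\tau\}$, and that faithfulness is precisely what licenses the use of Lemma~\ref{lemma:concept}.

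Next I would assemble the output graphs. CITS-Oracle builds $\widehat G$ by drawing the edges $\text{pa}(v,2\tau+1)\to X_{v,2\tau+1}$, $v=1,\ldots,p$; by the stationarity-induced time-invariance of the SCM~\eqref{eqdef: process}, the sub-DAG formed by the edges incident on time $2\tau+1$ is $G$ itself, so $\widehat G = G$. It then builds $\widehat{G_R}$ on vertices $\{1,\ldots,p\}$ with $u\to v$ iff $X_{u,s}\in\text{pa}(v,2\tau+1)$ for some $s\in\{\tau+1,\ldots,2\tau\}$; by the reduction recorded just below Definition~\ref{def:rolledgraph} (absence of concurrent effects restricts $G_R$-edges to pairs $(u,t_1)\to(v,t_2)$ with $t_1<t_2$, and stationarity together with Markovianity collapses this to the incoming edges at a single time), this is precisely $G_R$, so $\widehat{G_R}=G_R$.

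The only step I expect to need genuine care is matching the algorithm's search space to Lemma~\ref{lemma:concept} at $t=2\tau+1$: that restricting candidate parents to the window $s\in\{\tau+1,\ldots,2\tau\}$ is lossless (Markov order $\tau$), that restricting conditioning sets to the $2\tau$-window $\{1,\ldots,2\tau\}$ is lossless (exactly what statement~(3) of Lemma~\ref{lemma:concept} asserts), and that $(u,s)$ must be barred from $S$ lest $X_{v,2\tau+1}\ind X_{u,s}\mid\bm{X}_S$ hold for a trivial reason. Everything else is bookkeeping: the oracle commits no statistical error, so exhaustive enumeration of the finitely many subsets is exact, and the temporal ordering removes any Markov-equivalence-class ambiguity in orienting the recovered edges.
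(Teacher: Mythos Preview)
Your proposal is correct and follows essentially the same route as the paper's own proof: both reduce to applying Lemma~\ref{lemma:concept} at the reference time $t=2\tau+1$, using the implication (3)$\Rightarrow$(1) to justify edge deletions and (1)$\Rightarrow$(3) to certify that surviving edges are true parents, and then invoking time-invariance and Definition~\ref{def:rolledgraph} to read off $G$ and $G_R$. Your write-up is if anything slightly more explicit about why the window restrictions on $s$ and on $S$ are lossless and why $(u,s)$ must be excluded from $S$, but the argument is the same.
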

\begin{proof}
    Since the DAG is time invariant, in order to obtain $G$, it suffices to obtain parental sets of variables $\text{pa}(v,2\tau+1)$ at a fixed time $t=2\tau+1$ and $v=1,\ldots,p$. This justifies to inputting times $t=1,\ldots,2\tau+1$ to the algorithm \ref{alg:CITS_oracle} to obtain $G$.

    Line 1 of algorithm \ref{alg:CITS_oracle} initializes the DAG. For any $u,v=1,\ldots,p$, if there is an edge in $G$ from $X_{u,s}$ to $X_{v,t}$, then $s<t$ holds, due to time order. This justifies having only edges $X_{u,s}\rightarrow X_{v,t}$ with $s<t$ for the initialization.
    
    Lines 2-9 consider each possible edge $X_{u,s}\rightarrow X_{v,2\tau+1}$ and deletes the edge if for some $S\subseteq\{(d,r):d=1,\ldots,p; r=1,\ldots,2\tau+1\}$, $X_{u,s}\ind X_{v,2\tau+1}\vert \bm{X}_S$, where $\bm{X}_{S} = \{X_{d,r}: (d,r)\in S\}$. This is justified by the implication (3)$\implies$ (1) in Lemma \ref{lemma:concept} concluding that $X_{u,s}\not\in \text{pa}(v,2\tau+1)$.
    
    In fact, by Lemma \ref{lemma:concept} implication (1) $\implies$ (3), for the remaining edges $X_{u,s}\rightarrow X_{v,2\tau+1}$, $X_{u,s}$ will be the parents of $X_{v,2\tau+1}$, $u=1,\ldots, p$, $s=\tau+1,\ldots,2\tau+1$. This justifies that line 10 correctly finds the parental set $\text{pa}(v,2\tau+1)$ from the remaining edges in $G$.

    Based on the parental sets $\text{pa}(v,2\tau+1)$, $v=1,\ldots,p$, line 11 trivially directs the edges from $\text{pa}(v,2\tau+1)$ to $X_{v,2\tau+1}$ to outputs the DAG $G$ and line 12 converts $G$ to its Rolled Graph $G_R$. 

\end{proof}

\begin{algorithm}[t!]
\SetKwInOut{Input}{Input}
\SetKwInOut{Output}{Output}
\Input{$X_{v,t}, v=1,\ldots,p; t=1,\ldots,2\tau+1$, Conditional Independence Information.}
\Output{DAG $G$ and Rolled DAG $G_R$}

Start with an initial DAG $G_1$ between nodes $\{X_{v,t}: v=1,\ldots,p; t=1,\ldots,2\tau+1\}$ with edges $X_{u,s}\rightarrow X_{v,2\tau+1}$ for $s=\tau+1,\ldots,2\tau$, $u,v=1,\ldots,p$.

\Repeat{all $u,v=1,\ldots,p$, $s=\tau+1,\ldots,2\tau+1$ are tested.}{
\Repeat{edge $X_{u,s}\rightarrow X_{v,2\tau+1}$ is deleted or all $S\subseteq\{(d,r):d=1,\ldots,p; r=1,\ldots,2\tau+1\}\setminus\{(u,s),(v,2\tau+1)\}$ are selected.}{
            Choose $S\subseteq\{(d,r):d=1,\ldots,p; r=1,\ldots,2\tau+1\}$.
            
            \uIf{$X_{u,s}\ind X_{v,2\tau+1} ~\vert~ \bm{X}_{S}$}{
                
                Delete edge $X_{u,s}\rightarrow X_{v,t}$.
                
                Denote this new graph by $G_1$.
}
}
}

$\text{pa}(v,2\tau+1) = \{X_{u,s}: X_{u,s}\rightarrow X_{v,2\tau+1} \text{ in $G_1$}; s=\tau+1,\ldots,2\tau; u=1,\ldots,p\}$.

Obtain the DAG $G$ by edges directing from $\text{pa}(v,2\tau+1)\rightarrow X_{v,2\tau+1}$.

Obtain the Rolled Graph $G_R$ with nodes $v=1,\ldots,p$ and edge $u\rightarrow v$ if $X_{u,s}\in \text{pa}(v,2\tau+1)$ for some $u=1,\ldots,p; s=\tau+1,\ldots,2\tau$.

\caption{CITS-Oracle}\label{alg:CITS_oracle}
\end{algorithm}

\subsubsection{Sample version: Without a Conditional Dependence Oracle}
For the sample version of the CITS algorithm (CITS-sample), we replace the conditional independence statements by outcomes of statistical tests for conditional dependence based on a sample. For details of appropriate conditional dependence tests, see Appendix \ref{appsec:condindtests}. It is noteworthy that the observed data does not have replicates of the entire stochastic process and only consists of a single realization of the stochastic process up to time $n$. Considering this, we first construct time-windowed samples by taking consecutive time windows of a duration of $2\tau+1$. That is, the samples are $\chi_k = \{X_{v,t}:v=1,\ldots,p; t= (2\tau+1)(k-1)+1,\ldots,(2\tau+1)k\}, k=1,\ldots,N$ where $N=\floor{\frac{n}{2\tau+1}}$. For example, for some $v=1,\ldots,p$ and $t=1,\ldots,2\tau+1$, the samples for $X_{v,t}$ based on $\{\chi_k\}_{k=1}^N$ are: $\{X_{v,(2\tau+1)(k-1)+t}:k=1,\ldots,N\}$. Then for $u,v=1,\ldots,p; t=2\tau+1; s=\tau+1,\ldots,2\tau+1; S\subset\{(d,r):d=1,\ldots,p; r= 1,\ldots,2\tau+1\}$, we test the hypotheses $X_{u,s}\ind X_{v,t} \vert \bm{X}_{S}$ based on samples $\chi_k$. We then estimate $\text{pa}(v,2\tau+1)$ using the same steps as CITS-oracle, but replacing the conditional independence statements by the outcome of the statistical tests. The CITS-sample algorithm is outlined in Algorithm \ref{alg:CITS_sample}.

\begin{algorithm}[t!]
\SetKwInOut{Input}{Input}
\SetKwInOut{Output}{Output}
\Input{$X_{v,t}, v=1,\ldots,p; t=1,\ldots,n$}
\Output{Estimated DAG $\hat{G}$ and Rolled Graph $\hat{G}_R$}

Construct Time-Windowed samples: $\chi_k = \{X_{v,t}:v=1,\ldots,p; t= (2\tau+1)(k-1)+1,\ldots,(2\tau+1)k\}, k=1,\ldots,N$.

Run the CITS-Oracle algorithm while replacing the conditional independence statement in Line 5 by statistical tests in Appendix \ref{appsec:condindtests} based on samples $\chi_k$ to output DAG $\hat{G}$ and Rolled Graph $\hat{G}_R$.

\caption{CITS-Sample}\label{alg:CITS_sample}
\end{algorithm}

\subsection{Comparative study of performance in simulated datasets}
We compare the performance of CITS, Pairwise Granger Causality (GC1), Multivariate Granger Causality (GC2), naive application of PC algorithm (PC), and Time-Aware PC algorithm (TPC), to recover the ground truth causal relations in simulated datasets. We use simulated datasets from a variety of time series models ranging from linear to non-linear models, with and without common causes, and consider both the Gaussian and non-Gaussian noise settings (See Appendix \ref{appen:simulstudy}).

In the simulations, for each model, we obtain 25 simulations of the entire time series each for different noise levels $\eta \in \{0.1,0.5,1,1.5,2,2.5,3,3.5\}$. All the time series simulated have $n=1000$ samples. We also used the level $\alpha$ of the conditional dependence test with $\alpha$ ranging in $0.01, 0.05$ and $0.1$. The performance of the methods in recovering the ground truth causal relationships is summarized using the following three metrics: (1) Combined Score (CS), (2) True Positive Rate (TPR) and (3) 1 - False Positive Rate (IFPR). Let True Positive (TP) represent the number of correctly identified edges, True Negative (TN) represent the number of correctly identified missing edges, False Positive (FP) represent the number of incorrectly identified edges, and False Negative (FN) represent the number of incorrectly identified missing edges across simulations. IFPR is defined as: $$\text{IFPR}=\left(1-\frac{\text{FP}}{\text{FP+TN}}\right)\cdot 100,$$ which measures the ratio of the number of correctly identified missing edges by the algorithm to the total number of true missing edges. Note that the rate is reported such that $100\%$ corresponds to no falsely detected edges. TPR is defined as $\text{TPR}=\left(\frac{\text{TP}}{\text{TP} + \text{FN}}\right)\cdot 100$ i.e. the ratio of the number of correctly identified edges by the algorithm to the total number of true edges in percent. The Combined Score (CS) is given by Youden's Index \citep{vsimundic2009measures,hilden1996regret}, as follows, $\text{CS} = \text{TPR} - \text{FPR}$. 

In the Gaussian settings, we use the partial correlation based conditional dependence test and in the non-Gaussian settings we use the Hilbert-Schmidt Criterion. The Gaussian conditional dependence tests, in practice, use a fixed level $\alpha$ and takes the form $\sqrt{n-k-3} ~ \log(\frac{1+\hat{\rho}}{1-\hat{\rho}}) \leq \Phi^{-1}(1-\alpha)$, similar to the \textit{pcalg} package in \textit{R} and other software such as \textit{TETRAD$^{IV}$}. One can thus use $\gamma = \Phi^{-1}(1-\alpha)/\sqrt{n-\bm{k}-3}$, which also gives rise to a consistent test.

%\subsection{Results}

In each of our simulation settings, there are $4$ neurons and $16$ possible edges (including self-loops), leading to a total of $400$ possible edges across $25$ simulations. In Fig \ref{fig:compsim}, we compare in detail the results for GC1, GC2, PC, TPC and CITS for inferring the true Rolled graph at noise level $\eta =1$ and thresholding parameter $\alpha = 0.05$. In Fig \ref{fig:compvaralph}, we compare the Combined Score (CS) of the approaches over different noise levels $\eta$ and significance levels $\alpha$ for each simulation setting.

\begin{figure}
    \centering
    \includegraphics[width=0.4\textwidth]{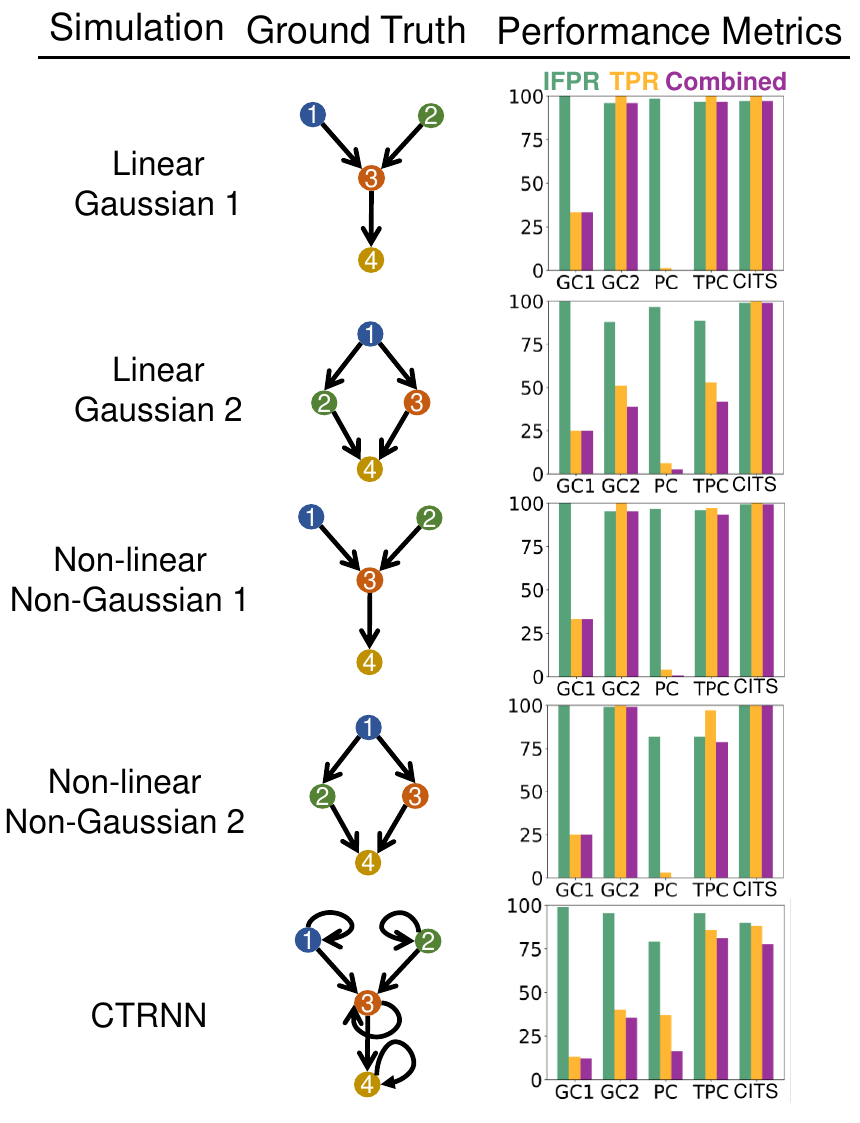}
    \caption{\textbf{Comparative study of inferring the Rolled graph.} Inference of Rolled graph for five simulation settings (top to bottom in left column): Linear Gaussian Models 1 and 2, Non-linear Non Gaussian Models 1 and 2, CTRNN. The ground truth for each simulation paradigm is graphically represented (middle column). The performances of the five methods: GC1, GC2, PC, TPC and CITS, are shown in terms of three metrics (right column): IFPR (green), TPR (orange) and CS (purple). }
    \label{fig:compsim}
\end{figure}
\begin{figure}
    \centering    
    \includegraphics[width=\textwidth]{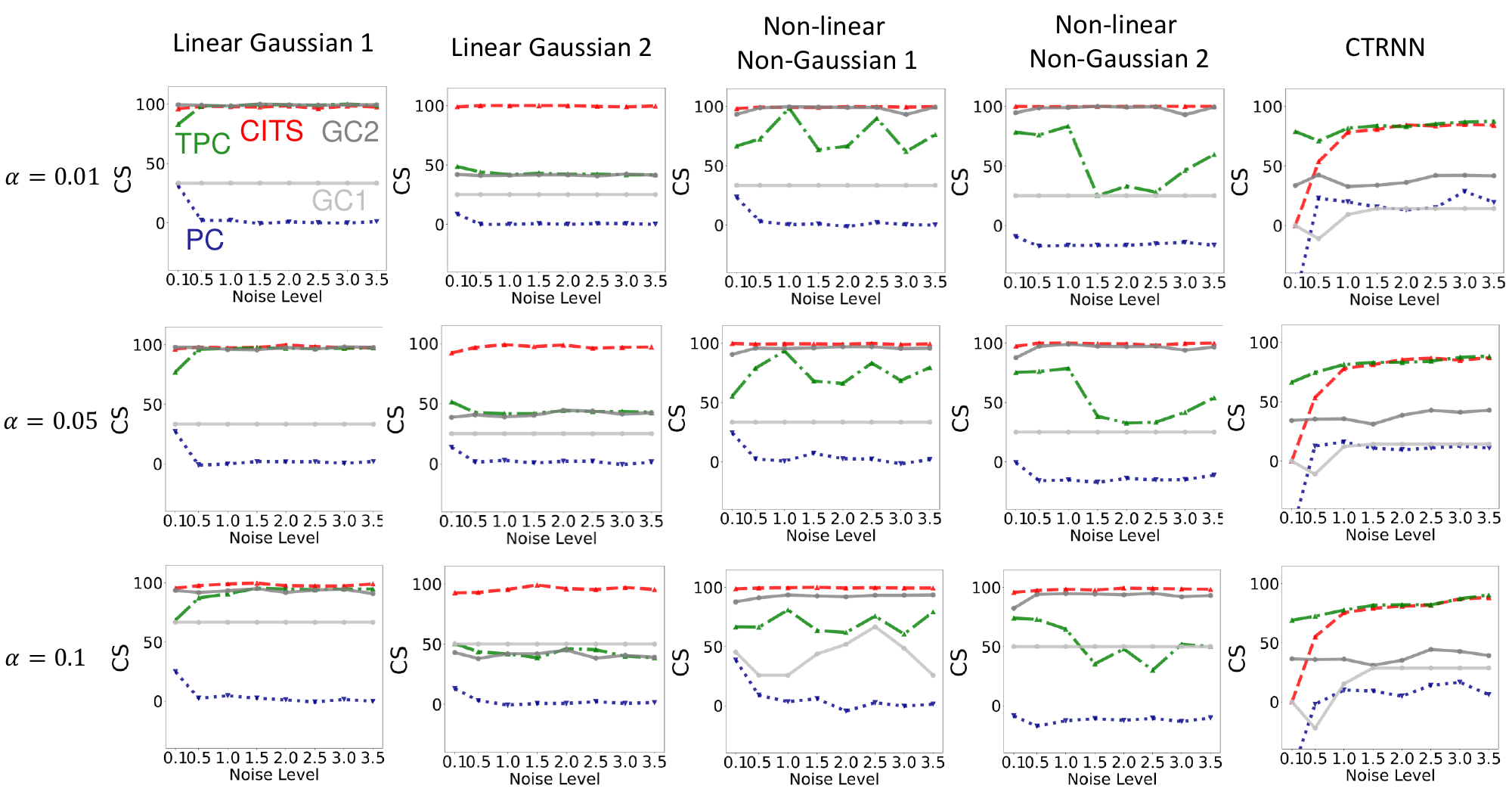}
    \caption{\textbf{Comparison of performance across different noise levels $\eta$ and significance levels $\alpha$.} Combined Score of the four methods of causal inference of Rolled graph - CITS (red), TPC (green), PC (blue), GC1 (light gray), GC2 (dark gray) over varying noise levels in simulation $\eta = 0.1,0.5,1.0,\ldots,3.5$, for simulated motifs from Linear Gaussian, Non-linear Non-Gaussian and CTRNN paradigms (left to right), with significance level $\alpha = 0.01, 0.05, 0.1$ (top to bottom).}
    \label{fig:compvaralph}
\end{figure}

\begin{itemize}[leftmargin=0pt]
    \item[] In the \textit{Linear Gaussian 1 scenario (top row in Fig \ref{fig:compsim})}, the Ground Truth Rolled graph consists of $1\rightarrow 3$, $2\rightarrow 3$ and $3\rightarrow 4$. Among the methods for inference, GC1, GC2, PC, TPC and CITS produce $\text{IFPR} = 100\%, 96\%, 98.5\%, 96.6\%, 97.2\%$, $\text{TPR}=33.3\%, 100\%, 1.3\%, 100\%$, and $100 \% $, and CS $= 33.3\%, 96\%, -0.2\%, 96.6\%, 97.2\%$ respectively. Therefore, among these methods, we conclude that GC1 has very low false positive rate, and hence, is highly specific to detecting the true edges. However, it has low true positive rate, hence, it misses out detecting many of the true edges. Among the approaches, CITS has the highest performance that is closely followed by TPC and GC2, and then GC1, and lastly PC, which has a poor performance.
    
    \item[] In the \textit{Linear Gaussian 2 scenario (second row)}, the Ground Truth Rolled graph consists of edges $1\rightarrow 2$, $1\rightarrow 3$, $2\rightarrow 4$, $3\rightarrow 4$, which comprise a common cause in vertex $1$ and a common effect in vertex $4$. Among the methods for inference, GC1, GC2, PC, TPC and CITS produce $\text{IFPR} = 100\%, 88\%, 96.7\%, 88.7\%, 99\%$, $\text{TPR}=25\%, 51\%, 6\%, 53\%$, and $100 \% $, and CS $= 25\%, 39\%, 2.7\%, 41.7\%, 99\%$ respectively. Therefore, among the methods, CITS has the best performance in CS, followed by TPC, GC2, GC1 and lastly PC. The poor performance of GC2 in this setting can be explained by multi-collinearity in regression due to presence of a common cause.
    \item[] In the \textit{Non-linear Non-Gaussian 1 scenario (third row)}, the Ground Truth CFC consists of edges $1\rightarrow 3, 3\rightarrow 4$ as causal influences of an increasing nature due to $\sin(x)$ being an increasing function in the domain $[0,\frac{\pi}{2}]$, while the edge $2\rightarrow 3$ is a decreasing influence due to $-\sin(x)$ being a decreasing function within the same domain. This setting also comprises a common effect in vertex $3$ but no common cause. GC1, GC2, PC, TPC and CITS yielded $\text{IFPR} = 100\%, 95.4\%, 96.6\%, 96\%, 99.4\%$ and $\text{TPR} = 33.3\%, 100\%, 4\%, 97.3\%, 100\%$ and CS $=33.3\%, 95.4\%, 0.6\%, 93.3\%$ and $99.4\%$. In terms of CS, CITS has the highest performance among the methods.
    \item[] In the \textit{Non-linear Non-Gaussian 2 scenario (fourth row)}, the Ground Truth CFC consists of edges $1\rightarrow 2, 1\rightarrow 3, 2\rightarrow 4, 3\rightarrow 4$, which comprise a common cause in vertex $1$ and a common effect in vertex $4$. GC1, GC2, PC, TPC and CITS yielded $\text{IFPR} = 100\%, 99\%, 81.7\%, 81.7\%, 100\%$ and $\text{TPR} = 25\%, 100\%, 3\%, 97\%, 100\%$ and CS $=25\%, 99\%, -15.3\%, 78.7\%$ and $100\%$. In terms of CS, CITS again has the highest performance among the methods. 
    \item[] In \textit{CTRNN scenario (third column)}, self-loops are present for each vertex. It is noteworthy that GC1 and PC does not obtain self-loops. Among the methods, IFPR of GC1, GC2, PC, TPC and CITS is $99.1\%, 95.6\%, 79.1\%, 95.6\%, 89.8\%$ and $\text{TPR}$ is $13.1\%, 40\%, 37.1\%, 85.7\%, 88\%$ and CS is $12.3\%, 35.6\%, 16.3\%, 81.3\%, 77.8\%$ respectively. Among all methods, CITS has the highest TPR, followed by TPC. GC1 has the highest IFPR, followed by GC2 and TPC, then CITS and lastly PC. In terms of the CS, TPC has the highest performance, closely followed by CITS, compared to other methods, with a CS of $42.2\%$ higher than the next best method.
\end{itemize}

We compare the Combined Score of CITS with other approaches across noise levels $\eta$ ranging from $0.1$ to $3.5$ and significance levels $\alpha = 0.01,0.05,0.1$ in Fig \ref{fig:compvaralph}. In the Linear Gaussian 1 scenario, CITS has a CS of nearly $100\%$ across noise levels greater than $1.0$ and all signifiance levels considered, closely matching the parametric GC2 model as well as non-parametric TPC, which are followed by GC1 in performance and lastly PC. In the Linear Gaussian 2 scenario, the distinction is remarkable, where CITS has a CS of $\approx 100\%$ across all levels of simulation noise and significance level $\alpha$, however the next best model for this setting are TPC and GC2 with a CS of $\approx 50\%$, followed by GC1 and lastly PC. In the Non-linear Non-Gaussian 1 and 2 scenarios, CITS has the highest CS compared to other methods across levels of noise and $\alpha$. In the CTRNN scenario, the best CS achieved is lower compared to other simulation paradigms. However, CITS and TPC have better performance compared to the other methods for noise level $\eta\geq 0.5$ and all $\alpha$.

\begin{figure}[t]
    \centering
    \includegraphics[width = \textwidth]{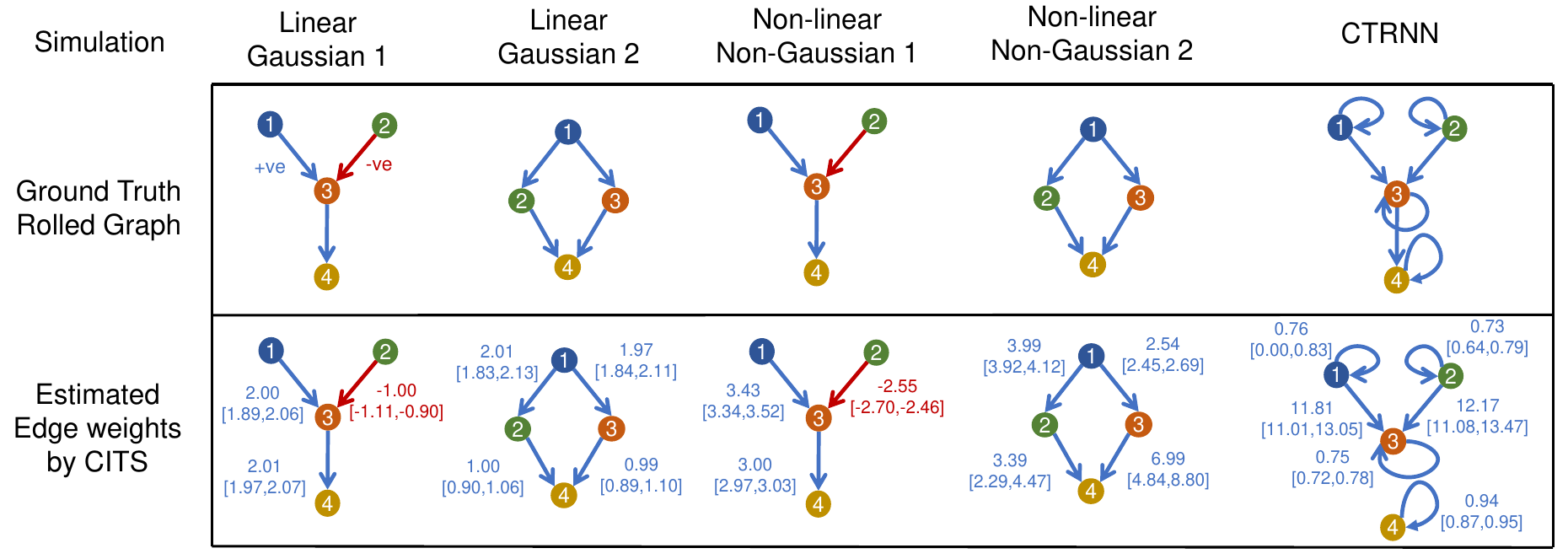}
    \caption{Estimated edge weights (median [min, max])  from Section \ref{sec:edgeweights} for the simulation paradigms: Linear Gaussian 1, Linear Gaussian 2, Non-linear Non-Gaussian 1, Non-linear Non-Gaussian 2, and CTRNN (left to right).}
    \label{fig:edge_wts}
\end{figure}

In Fig \ref{fig:edge_wts}, we present the results on edge weights and inferred nature of edges (whether they are positive/increasing or negative/decreasing) for noise level $\eta = 1$ and thresholding parameter $\alpha = 0.05$. For the \textit{Linear Gaussian 1 scenario}, the estimated edge weights for connections $1 \rightarrow 3$, $2 \rightarrow 3$, and $3 \rightarrow 4$ are 2.00, -1.00, and 2.01, respectively, in median across simulation trials, and range between [1.89, 2.06], [-1.11,-0.90] and [1.97, 2.07] respectively. These estimated weights thus closely agree with the true linear weights 2, -1 and 2 respectively in the Ground Truth model. It is noteworthy that the positive or negative nature of the estimated edges across all simulation trials are in agreement with the Ground Truth. 

In the \textit{Linear Gaussian 2 scenario}, the estimated edge weights for connections $1 \rightarrow 2$, $1 \rightarrow 3$, and $2 \rightarrow 4$ and $3\rightarrow 4$ are 2.01, 1.97, 1.00 and 0.99, respectively, in median across simulation trials, and range between [1.83,2.13], [1.84,2.11], [0.90,1.06], and [0.89,1.10] respectively. Again, the estimated values of the edge weights are in close agreement with the linear weights 2, 2, 1, 1 respectively in the Ground Truth model, and the estimated positive nature of the edges are in agreement with that of the Ground Truth in all the simulation trials. 

In the \textit{Non-linear Non-Gaussian 1} scenario, the estimated edge weights for connections $1 \rightarrow 3$, $2 \rightarrow 3$, and $3 \rightarrow 4$ are 3.43, -2.55, and 3.00, respectively, in median across simulation trials, and range between [3.34, 3.52], [-2.70,-2.46] and [2.97, 3.03] respectively. The signs of the estimated edges across all simulation trials are thereby in agreement with the Ground Truth. 

In the \textit{Non-linear Non-Gaussian 2} scenario, the estimated edge weights for connections $1 \rightarrow 2$, $1 \rightarrow 3$, and $2 \rightarrow 4$ and $3\rightarrow 4$ are 3.99, 2.54, 3.39 and 6.99, respectively, in median across simulation trials, and range between [3.92,4.12], [2.45,2.69], [2.29,4.47], and [4.84,8.80] respectively. Again, the estimated positive nature of the edges are in agreement with that of the Ground Truth in all the simulation trials. 

Lastly, in the \textit{CTRNN scenario}, the estimated edge weights for self-connections $1\rightarrow 1$, $2\rightarrow 2$, $3\rightarrow 3$ and $4\rightarrow 4$ are 0.76, 0.73, 0.75 and 0.94, respectively, in median across all simulation trials, and range between [0, 0.83], [0.64, 0.79], [0.72, 0.78] and [0.87, 0.95] respectively. The estimated edge weights for the non-self edges $1 \rightarrow 3$ and $2 \rightarrow 3$ are 11.81 and 12.17, respectively, in median across all simulation trials, and range between [11.01, 13.05] and [11.08, 13.47] respectively. Here also, the estimated positive nature of the edges is in agreement with the Ground Truth in all the simulation trials.

\subsection{Application to neural connectomics: Neuropixels data}\label{sec:application}

In neuroscience, the term Functional Connectome (FC) refers to the network of interactions between individual units of the brain, such as neurons or brain regions, with respect to their activity over time \citep{reid2012functional, biswas2022statistical1}. The main purpose of identifying the FC is to gain an understanding of how neurons work together to create brain function. The FC can be represented as a graph, where nodes denote neurons and edges denote a stochastic relationship between the activities of connected neurons. These edges can be undirected, indicating a stochastic association, whence the FC is termed as Associative FC (AFC). Alternatively, the edges can be directed and represent a causal relationship between the activities of neurons, whence the FC is termed as Causal FC (CFC). Finding the CFC is expected to facilitate the inference of the governing neural interaction pathways essential for brain functioning and behavior \citep{finn2015functional,biswas2022statistical1}, and serves as a promising biomarker for neuro-psychiatric diseases \citep{ashraf2015cortical,nakamura2017early, brier2014functional, sheline2013resting}. In terms of the framework described in Section \ref{sec:tsscm}, the CFC can be represented by the Rolled Graph of causal neural interaction in the neural time series, which is inferred by CITS algorithm (See Definition \ref{def:rolledgraph}). We now demonstrate application of the methods in this paper to obtain the CFC between neurons from neural time series in a real neurobiological dataset consisting of electrophysiological recordings in the Visual Coding Neuropixels dataset of the Allen Brain Observatory.

\begin{figure}[t!]
    \centering
    \includegraphics[width = 0.9\textwidth]{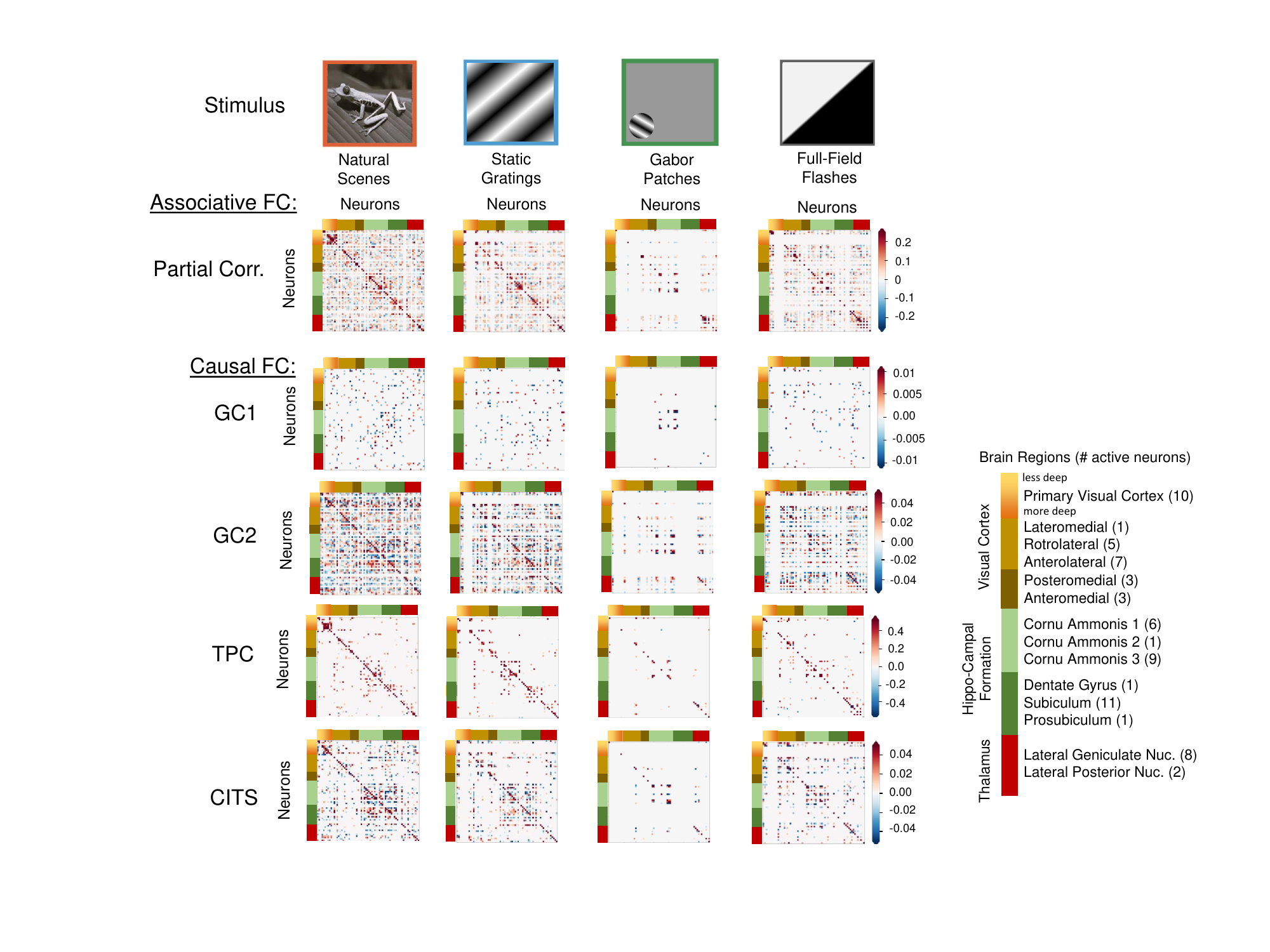}
    \caption{Four different methods for inferring functional connectivity (FC) were compared in a mice brain benchmark data from the Allen Institute's Neuropixels dataset. These methods included Associative FC using Partial Correlation, and Causal FC using GC1, GC2, TPC and CITS. The FC is estimated and represented by an adjacency matrix with edge weights, which is symmetric for Associative FC and asymmetric for Causal FC. In the adjacency matrices, a non-zero entry in $(i, j)$ indicated the connection of neuron $i$ to $j$.}
    \label{fig:resneuropixels}
\end{figure}

In this study, we evaluate CITS alongside Time-Aware PC (TPC), Granger Causality (GC1 and GC2) and Partial Correlation which are well-known methods to obtain CFC and Associative Functional Connectivity (AFC) from electrophysiological recordings. We use the Visual Coding Neuropixels dataset from the Allen Brain Observatory ~\citep{de2020large,allenbrainobs}, which includes sorted spike trains and local field potentials recorded simultaneously across six cortical visual areas, hippocampus, thalamus, and other adjacent structures of the mice brain, while being presented with various stimuli such as natural scenes/images, static gratings, Gabors, and full-field flashes. These stimuli are repeatedly presented to the mice and the data is recorded using the Neuropixels technology, which inserts multiple electrodes into the brain allowing real-time recording from hundreds of neurons across the brain  ~\citep{neuropixels}. For more details of the dataset, see Appendix \ref{datadescriptionfull}.

In Figure \ref{fig:resneuropixels}, we show the adjacency matrices for the functional connectivity (FC) obtained by different methods, each for one trial in a specific stimuli category. The Associative FC (AFC) pattern among neurons in each stimuli category is distinct. The Causal FC (CFC) is expected to be a directed subset of the AFC and consistent with its overall AFC pattern \citep{dadgostar2016functional, wang2016efficient}. Such is observed in the CFC obtained by TPC and CITS, which yield asymmetric adjacency matrices, yet match the overall pattern in the AFC in a sparse and dense manner, respectively. In contrast, the CFC obtained by GC1 is sparse but does not match the patterns in the AFC so well, and the one obtained by GC2 is too dense, thereby prone to noise. In the CFC obtained by both TPC and CITS (refer to Fig \ref{fig:resneuropixels}), natural scenes evoke greater connectivity within active neurons in the Primary Visual Cortex, static gratings evoke greater connectivity in the Posteromedial and Anteromedial Visual Cortex, and full-field flashes evoke more connectivity in the Anterolateral Visual Cortex and Thalamus compared to other stimuli. All the four stimuli exhibit distinct connectivity patterns in the Cornu Ammonis regions of the Hippocampal Formation. In addition, natural scenes and static gratings induce more prominent connectivity within the Subiculum compared to other stimuli.

\section{Discussion}
The paper focuses on stationary Markovian processes that follow a structural causal model and do not have concurrent effects. For such processes, we propose a novel algorithm for detecting the direct causes of the variables at a specific time, coined Time-series Structural Causal Inference (CITS), based on the conditional independence information within a $2\tau$ window before that time. We prove the correctness of the CITS algorithm provided a conditional independence oracle. We show its consistency in estimating the correct DAG with a consistent statistical test for conditional dependence. Such tests include Pearson's partial correlations for the Gaussian regime and the Hilbert-Schmidt Criterion in the non-Gaussian regime. The comparative performance of the CITS algorithm is demonstrated through numerical experiments. We apply the methods to obtain the Causal Functional Connectivity (CFC) between neurons from neural time series, recorded from mice brain presented with different stimuli. The results provide insights into the characteristics of the CFC across a variety of stimuli scenarios.

A valuable and novel aspect of CITS is that it finds the direct causal influences on the variables at a specific time by looking for conditional independence information within a $2\tau$ interval preceding a specific time. In essence, the $2\tau$ window suffices as the Markovian property of order $\tau$ implies that influences will originate from a $\tau$ window preceding the specific time, and influences to the $\tau$ window will originate from a $2\tau$ window preceding the time. Furthermore, in numerical experiments, CITS performed better overall compared to other approaches. In particular, in experiments with a common cause for multiple variables, while all other methods suffer in performance, CITS maintains its close-to-perfect performance due to leveraging the conditional independence information in the $2\tau$ window.

Under the model assumptions of first order Markovian SCM and no concurrent effects, even with the presence of unobserved confounders, the DAG and Rolled Graph corresponding to the SCM can be obtained by CITS devoid of any spurious edges introduced by the unobserved variables. This is because spurious edges arise if and only if an unobserved variable is a common parent to observed variables, and under these assumptions, this can happen if and only if the children are at the same time, e.g. $X_{u,t}$, $X_{v,t}$, which is ruled out by the no concurrent effects assumption. Since the first order Markov property is a reasonable condition for many applications \citep{liu2010application}, it makes this proposition appealing in such scenarios.

CITS provides a powerful tool for causal inference in the time series setting based on stationary Markovian Structural Causal Models. The Markovian assumption can be tested by statistical tests such as in \citep{chen2012testing}. In the non-Markovian scenario, the $2\tau$ technique used by CITS loses its relevance, yet a Markovian assumption is widely used for time series and is often reasonable to make in practice \citep{freedman2012markov}. The stationary assumption can also be tested by statistical tests \citep{witt1998testing, priestley1969test}.  For a non-stationary scenario, the time series can often be considered to be locally stationary \citep{vogt2012nonparametric, last2008detecting}. Then, CITS can be implemented on smaller consecutive windows over time, which outputs a causal graph for each window, thereby yielding a varying causal graph over time.

\section{Methods}
\subsection{Causal inference in time series - Review}\label{sec: cinf_ts_review}
Among methodologies for causal inference in the time series scenario, one of the foremost is Granger Causality \citep{granger2001essays,shojaie2022granger}. Granger Causality became popular as a parametric model-based approach that uses a vector autoregressive model for the time series data whose non-zero coefficients indicate causal effect between variables. In recent years, there have been non-linear extensions to Granger Causality \citep{tank2021neural}. Transfer Entropy is a non-parametric approach equivalent to Granger Causality for Gaussian processes \citep{barnett2009granger}.

A different framework for causal inference is the well-known Directed Graphical Modeling framework, which models causal relationships between variables by the Directed Markov Property with respect to a Directed Acyclic Graph (DAG) \citep{spirtes2000causation,pearl2009causality}. It is a popular framework for independent and identically distributed (i.i.d.) datasets. Inference in this framework is either constraint-based or score-based. Constraint-based methods are based on conditional independence (CI) tests such as the SGS algorithm \citep{verma2022equivalence}, its faster incarnation of PC algorithm \citep{kalisch2007estimating}, both assuming no latent confounders, and the FCI algorithm in the presence of latent confounders \citep{spirtes1999algorithm}. On the other hand, score-based methods perform a search on the space of all DAGs to maximize a goodness-of-fit score, for example, the Greedy Equivalence Search (GES) and Greedy Interventional Equivalence Search (GIES) \citep{hauser2012characterization}. However, these approaches are based on i.i.d. observations, and their extension to time series settings is not trivial. In fact, naive application of PC algorithm to time series data is seen to suffer in performance due to not incorporating across-time causal relations \citep{biswas2022statistical1}.

Recently the PC algorithm has been extended to form the Time-Aware PC algorithm that is applicable to time series datasets \citep{biswas2022statistical2}. Entner and Hoyer suggest to use the FCI algorithm for inferring the causal relationships from time series in the presence of unobserved variables \citep{entner2010causal}. The PCMCI algorithm \citep{runge2019detecting}, another variation of the PC algorithm,  consists of the following two steps. In the first step, it uses the PC algorithm to detect parents of a variable at a time $t$, and in the second step, it applies a momentary conditional independence (MCI) test. Chu et al. \citep{chu2008search} propose a causal inference algorithm based on conditional independence, designed for additive, nonlinear and stationary time series, using two kinds of conditional dependence tests based on additive regression model. Using the fact that one can incrementally construct the mutual information between a cause and an effect, based on the mutual information values between the effect and the previously found causes, Jangyodsuk et al. \citep{jangyodsuk2014causal} proposes to obtain a causal graph with each time series as a node and the edge weight for each edge is the difference in time steps between the occurrence of cause and the occurrence of the effect. Amortized Causal Inference is another conditional independence-based algorithm, that infers causal structure by performing amortized variational inference over an arbitrary data-generating distribution \citep{lorch2022amortized}.
\subsection{Mathematical Guarantee of the CITS algorithm}
Let $\mu(X_{u,s},X_{v,t}\vert\bm{X}_S)$ denote a measure of conditional dependence of $X_{u,s}$ and $X_{v,t}$ given $\bm{X}_S$, i.e. it takes value $0$ if and only if we have conditional independence, and let $\hat{\mu}_n(X_{u,s},X_{v,t}\vert\bm{X}_S)$ be its consistent estimator. In Theorem \ref{thm:main}, we will use $\hat{\mu}_n(X_{u,s},X_{v,t}\vert\bm{X}_S)$ to construct a conditional dependence test guaranteeing consistency of the CITS estimate.

% \paragraph{Assumption 1}\label{assumption2} \begin{align*}\inf \{\vert\mu(X_{u,s},X_{v,t}\vert\bm{X}_S)\vert : \mu(X_{u,s},X_{v,t}\vert\bm{X}_S)\neq 0, u,v=1,\ldots,p;s=\tau+1,\ldots,2\tau+1;\\
%     \quad S\subseteq \{(d,r):d=1,\ldots,p;r=1,\ldots,2\tau+1\}\setminus\{(u,s),(v,2\tau+1)\}\}>0.\end{align*} 
\begin{theorem}\label{thm:main}
    Let $\{\bm{X}_t\}_{t\in \mathbb{Z}}$ be a strictly stationary Markovian process of order $\tau$, and assume that it satisfies a structural causal model with time invariant DAG $G$ as in \eqref{eqdef: process}, which exhibits no concurrent effects and whose distribution is faithful to $G$. Denote by $\hat{G}_n$ and $\hat{G}_{n,R}$, respectively, the estimates of $G$ and its Rolled Graph $G_R$ obtained by sample CITS (\ref{alg:CITS_sample}) based on $n$ time samples and using the conditional dependence test $\left\vert\hat{\mu}_n(X_{u,s},X_{v,t}\vert\bm{X}_S)\right\vert>\gamma$ for some fixed $\gamma>0$, where  $\hat{\mu}_n(X_{u,s},X_{v,t}\vert\bm{X}_S)$ is a consistent estimator of a conditional dependence measure $\mu(X_{u,s},X_{v,t}\vert\bm{X}_S)$. Then we have,
    \[
    P(\hat{G}_{n}=G)\rightarrow 1 \quad\text{ as }\quad n\rightarrow \infty.
    \]
    \[
    P(\hat{G}_{n,R}=G_R)\rightarrow 1 \quad\text{ as }\quad n\rightarrow \infty.
    \]
\end{theorem}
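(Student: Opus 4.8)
The plan is to deduce the sample statement from the oracle statement (Theorem~\ref{thm:CITS_oracle}) by showing that, with probability tending to one, every conditional (in)dependence decision made by Algorithm~\ref{alg:CITS_sample} agrees with the corresponding decision of the oracle. First I would record that, since the process is Markovian of order $\tau$ and CITS only ever inspects variables in the window $\{1,\dots,2\tau+1\}$, the algorithm queries the test ``$|\hat\mu_n(X_{u,s},X_{v,t}\vert\bm X_S)|\le\gamma$'' (conditional independence, delete the edge) only for indices in the \emph{fixed finite} set
\[
\mathcal I=\big\{(u,s,v,t,S):u,v\in\{1,\dots,p\},\ t=2\tau+1,\ s\in\{\tau+1,\dots,2\tau+1\},\ S\subseteq\{(d,r):d\le p,\ r\le 2\tau+1\}\big\},
\]
whose cardinality $M:=|\mathcal I|$ depends only on $p$ and $\tau$, not on $n$. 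By strict stationarity each windowed sample $\chi_k$ has the law of $(\bm X_1,\dots,\bm X_{2\tau+1})$, so the population quantity $\mu(X_{u,s},X_{v,t}\vert\bm X_S)$ is zero precisely when the corresponding conditional independence holds in $\{\bm X_t\}$, i.e.\ precisely for the statements on which Theorem~\ref{thm:CITS_oracle} is built.

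Next I would introduce the event that every one of these finitely many tests is correct,
\[
A_n=\bigcap_{(u,s,v,t,S)\in\mathcal I}\Big\{\mathbf 1\{|\hat\mu_n(X_{u,s},X_{v,t}\vert\bm X_S)|\le\gamma\}=\mathbf 1\{\mu(X_{u,s},X_{v,t}\vert\bm X_S)=0\}\Big\}.
\]
On $A_n$ the sample algorithm deletes an edge $X_{u,s}\to X_{v,2\tau+1}$ if and only if CITS-Oracle does — both delete it exactly when some conditioning set makes $X_{u,s}$ and $X_{v,2\tau+1}$ conditionally independent in $\{\bm X_t\}$, and this is independent of the order in which subsets $S$ are enumerated — so the two algorithms produce the same output, which by Theorem~\ref{thm:CITS_oracle} is $(G,G_R)$; in particular $\hat G_n=G$ and $\hat G_{n,R}=G_R$ on $A_n$. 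Therefore it suffices to prove $P(A_n)\to1$, and by the union bound $P(A_n^c)\le\sum_{(u,s,v,t,S)\in\mathcal I}p_n(u,s,v,t,S)$, where $p_n(\cdot)$ is the error probability of the individual test; since $M$ is finite and independent of $n$, it is enough that each $p_n(\cdot)\to0$.

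Fix an index. Consistency of $\hat\mu_n$ — which is assumed here, holds because $N=\lfloor n/(2\tau+1)\rfloor\to\infty$, and is exactly where the standard mixing conditions on $\{\bm X_t\}$ enter when one specializes to the partial-correlation or Hilbert--Schmidt tests of Appendix~\ref{appsec:condindtests} — gives $P\big(|\hat\mu_n(X_{u,s},X_{v,t}\vert\bm X_S)-\mu(X_{u,s},X_{v,t}\vert\bm X_S)|>\varepsilon\big)\to0$ for every $\varepsilon>0$. If $\mu(X_{u,s},X_{v,t}\vert\bm X_S)=0$, the test errs iff $|\hat\mu_n|>\gamma$, which has probability $\to0$ for every fixed $\gamma>0$. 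If $\mu(X_{u,s},X_{v,t}\vert\bm X_S)\ne0$, the test errs iff $|\hat\mu_n|\le\gamma$, and choosing $\gamma$ below $c_{\min}:=\min\{|\mu(X_{u,s},X_{v,t}\vert\bm X_S)|:(u,s,v,t,S)\in\mathcal I,\ \mu\ne0\}>0$ forces $|\hat\mu_n-\mu|\ge c_{\min}-\gamma$ on that event, so its probability also $\to0$. Combining the two cases with the union bound yields $P(A_n^c)\to0$, hence $P(\hat G_n=G)\to1$ and $P(\hat G_{n,R}=G_R)\to1$.

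The one point that needs genuine care is the threshold: a single fixed $\gamma$ must lie below $|\mu|$ for \emph{every} conditionally dependent configuration the algorithm can meet, so the argument truly uses that $\mathcal I$ is finite — whence the uniform positive bound $c_{\min}$ exists — and the hypothesis ``for some fixed $\gamma>0$'' should be read as ``for any fixed $\gamma\in(0,c_{\min})$''; alternatively one may let $\gamma=\gamma_n\to0$ provided $\hat\mu_n$ converges fast enough that $|\hat\mu_n|=o_P(\gamma_n)$ under each null, as for the Gaussian test with $\gamma_n=\Phi^{-1}(1-\alpha)/\sqrt{n-\bm k-3}$. Everything else is a finite union bound over a test count that does \emph{not} grow with $n$, because the window width $2\tau+1$ is fixed; in particular none of the dimension-uniformity issues of the PC algorithm on growing graphs arise here.
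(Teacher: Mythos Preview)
Your proposal is correct and follows essentially the same route as the paper: reduce to Theorem~\ref{thm:CITS_oracle} by showing, via a finite union bound over the fixed index set $\mathcal I$, that every conditional (in)dependence decision is correct with probability tending to $1$, splitting into Type~I and Type~II errors and handling the latter with the uniform positive lower bound $c_{\min}$ on the nonzero $|\mu|$. Your explicit remark that the threshold must satisfy $\gamma\in(0,c_{\min})$ matches the paper's choice $\gamma=c/2$, and your observation that the edge-deletion outcome is independent of the enumeration order of $S$ is a point the paper leaves implicit.
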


See Appendix \ref{proofTh4.1} for the proof of Theorem \ref{thm:main}. In this section, we are going to consider two consistent estimators of the conditional dependence measures, namely the (log-transformed version of the) partial correlation $Z_n$ in the Gaussian setting, and the Hilbert-Schmidt estimator $H_n$ (see Sections 2.3 and 2.4  in \citep{biswas2022consistent}) in the non-Gaussian setting. For a detailed account of these conditional dependence estimators, see Appendix \ref{appsec:condindtests}. 

In order to establish consistency of the estimators $Z_n$ and $H_n$, under the non-i.i.d. setting, one needs to assume some mixing conditions on the underlying time series. Two such standard mixing conditions are $\rho$-mixing and $\alpha$-mixing \citep{kolmogorov1960strong,bradley2005basic} (see Appendix \ref{defmixing} for the definitions of these two notions of mixing).  We make either of the following two assumptions related to mixing of the time series to ensure consistency of the estimators.

\subsection*{Assumption 1}\label{rhoassumptions}
\begin{enumerate}
    \item $\{X_{u,t},X_{v,t}: t=1,2,\ldots\}$ is $\rho$-mixing for all $u,v$, with maximal correlation coefficients $\xi_{uv}(k), k\ge 1$.
     \item $\e X_{u,t}^4 <\infty$ for all $u,t$ and $\sum_{k=1}^{\infty} \xi_{uv} (k) < \infty$ for $u,v\in 1,\dots,p$.
    \item There exists a sequence of positive integers $s_n\rightarrow \infty$ and $s_n = o(n^{1/2})$ such that $n^{1/2}\xi_{uv}(s_n)\rightarrow 0$ as $n\rightarrow\infty$ for $u,v\in 1,\ldots,p$.
\end{enumerate}

\subsection*{Assumption 1*}\label{alphaassumptions}
\begin{enumerate}
    \item[$1^*$.] $\{X_{u,t},X_{v,t}: t=1,2,\ldots\}$ is strongly mixing with coefficients $\{\alpha_{uv}(k)\}_{k\geq 1}$ for all $u,v=1,\ldots,p$.
    \item[$2^*$.] $\e |X_{v,t}|^{2\delta} < \infty$ for some $\delta > 2$ and all $v,t$, and the strongly mixing coefficients satisfy: $\sum_{k=1}^{\infty} \alpha_{uv} (k)^{1-2/\delta} < \infty$ for $u,v= 1,\dots,p$.
    \item[$3^*$.] There exists a sequence of positive integers $s_n\rightarrow \infty$ and $s_n = o(n^{1/2})$ such that $n^{1/2}\alpha_{uv}(s_n)\rightarrow 0$ as $n\rightarrow\infty$ for $u,v= 1,\ldots,p$.
\end{enumerate}
Assumptions 1.1-1.3 and $1^*.1^*-1^*.3^*$ are adapted from Conditions 1-2 in \citep{masry2011estimation}. See Remark 4.1 in \citep{biswas2022consistent} for a comparative discussion between these two sets of assumptions.

\begin{lemma}\label{lemma:consistency}
Under either Assumption \hyperref[rhoassumptions]{1} or Assumption \hyperref[alphaassumptions]{$1^*$}, 
\begin{enumerate}
    \item $Z_n(X_{u,s},X_{v,2\tau+1}\vert X_{S})$ converges to $z(X_{u,s},X_{v,2\tau+1}\vert X_{S})$ in probability.
    
    \item If the regularization constant $\epsilon_n$ satisfies $n^{-1/3} \ll \epsilon_n \ll 1$, then, $H_n(X_{u,s},X_{v,2\tau+1}\vert X_{S})$ converges to $H(X_{u,s},X_{v,2\tau+1}\vert X_{S})$ in probability,
\end{enumerate}
where $u,v=1,\ldots,p$, $s=\tau+1,\ldots,2\tau$ and $S\subset\{(d,r):d=1,\ldots,p;r=1,\ldots,2\tau+1\}\setminus\{(u,s),(v,2\tau+1)\}$.
\end{lemma}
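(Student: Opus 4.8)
The plan is to reduce both claims to the consistency theorems for the partial-correlation and conditional Hilbert--Schmidt dependence estimators already established under mixing in \citep{biswas2022consistent} (Sections 2.3 and 2.4 there), the only genuinely new ingredient being that the time-windowed resampling used in Algorithm \ref{alg:CITS_sample} preserves the mixing structure of the underlying process. First I would record this preservation precisely. Fix the window length $m=2\tau+1$; the sample associated with $X_{v,t}$ (for an offset $t\in\{1,\ldots,m\}$) is the \emph{decimated} sequence $Y^{(v,t)}_k:=X_{v,m(k-1)+t}$, $k=1,\ldots,N$, with $N=\lfloor n/m\rfloor$, and more generally the full window vector $\chi_k\in\mathbb{R}^{pm}$, $k=1,\ldots,N$, is a strictly stationary sequence obtained by \emph{blocking} the original process. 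Since $\rho$-mixing (resp. strong-mixing) coefficients are non-increasing in the gap, decimation at spacing $m$ only thins them: the blocked sequence is $\rho$-mixing with coefficients bounded by $\xi_{uv}(mk)\le\xi_{uv}(k)$ (resp. strongly mixing with coefficients bounded by $\alpha_{uv}(mk)\le\alpha_{uv}(k)$). Consequently summability and the $n^{1/2}$-rate conditions in Assumption \hyperref[rhoassumptions]{1} (resp. Assumption \hyperref[alphaassumptions]{$1^*$}) are inherited, using that $N\asymp n$ so that $N^{1/2}\xi_{uv}(s_n)\le (n/m)^{1/2}\xi_{uv}(s_n)\to 0$ and $s_n=o(n^{1/2})=o(N^{1/2})$ up to the constant $m$; the moment hypotheses $\e X_{u,t}^4<\infty$ (resp. $\e|X_{v,t}|^{2\delta}<\infty$ for $\delta>2$) are untouched by resampling.

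For part (1) I would write $Z_n(X_{u,s},X_{v,2\tau+1}\mid X_S)$ explicitly as the log-transformed (Fisher $z$) version of the empirical partial correlation $\hat\rho_n$, which is itself a fixed rational, hence continuous wherever the relevant denominators are nonzero, function of the empirical covariance matrix $\hat\Sigma_n$ of the finite collection $\{X_{u,s},X_{v,2\tau+1}\}\cup\{X_{d,r}:(d,r)\in S\}$. Every entry of $\hat\Sigma_n$ is a sample average over $k=1,\ldots,N$ of products of coordinates of $\chi_k$, so by a weak law of large numbers for $\rho$-mixing (resp. strongly mixing) sequences with finite fourth (resp. $2\delta$) moments --- exactly the regime of Conditions 1--2 in \citep{masry2011estimation} from which our assumptions are adapted --- we get $\hat\Sigma_n\to\Sigma$ in probability. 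Faithfulness together with nondegeneracy of the innovations $\epsilon_{v,t}$ guarantees that the submatrix of $\Sigma$ on the conditioning variables is positive definite and that the true partial correlation lies in $(-1,1)$, so the composite map (covariance $\mapsto$ partial correlation $\mapsto$ $z$-transform) is continuous at $\Sigma$; the continuous mapping theorem then yields $Z_n\to z(X_{u,s},X_{v,2\tau+1}\mid X_S)$ in probability.

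For part (2) I would invoke the consistency statement for the ridge-regularized conditional Hilbert--Schmidt estimator in Section 2.4 of \citep{biswas2022consistent}: for stationary sequences satisfying Assumption \hyperref[rhoassumptions]{1} or \hyperref[alphaassumptions]{$1^*$} and a regularization constant with $n^{-1/3}\ll\epsilon_n\ll 1$, $H_n$ converges in probability to $H$. Having verified in the first step that $\{\chi_k\}_k$ satisfies these assumptions, the cited result applies verbatim to $H_n(X_{u,s},X_{v,2\tau+1}\mid X_S)$. Intuitively, the regularization stabilises the kernel-ridge / Schur-complement expressions defining $H_n$: the lower bound $\epsilon_n\gg n^{-1/3}$ dominates the fluctuation rate of the Gram-matrix estimates, while $\epsilon_n\ll 1$ forces the regularization bias to vanish; the moment conditions in our Assumptions are precisely the ones needed there (fourth moments under Assumption 1, $2\delta$-th moments with $\delta>2$ under Assumption $1^*$).

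The main obstacle --- and essentially the only place requiring care --- is the first step: arguing rigorously that the blocked vector sequence $\{\chi_k\}_k$ is mixing with coefficients whose decay still satisfies summability and the $\sqrt{n}$-rate conditions, while tracking that the effective sample size $N=\lfloor n/(2\tau+1)\rfloor$ grows linearly in $n$ so the rate conditions transfer with no loss beyond the constant $2\tau+1$ (and, if one prefers multivariate mixing rather than the bivariate mixing stated in the assumptions, appealing to the discussion in Remark 4.1 of \citep{biswas2022consistent}). Once this is in hand, part (1) is a routine application of the continuous mapping theorem to a law of large numbers, and part (2) follows by direct citation; a secondary minor point is the nonvanishing of the partial-correlation denominators, i.e.\ positive definiteness of the relevant covariance submatrix, which follows from faithfulness and nondegenerate innovations.
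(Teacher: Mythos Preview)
Your proposal is correct and ultimately rests on the same external results as the paper: the paper's entire proof is the single sentence ``The proof directly follows from Lemmas 3.3 and 3.4 in \citep{biswas2022consistent}.'' You reach the same endpoint, but you supply the bridging step the paper leaves implicit---namely, that the blocked/decimated sequence $\{\chi_k\}$ inherits the $\rho$- (or $\alpha$-) mixing hypotheses of Assumption~1 (or $1^*$) with effective sample size $N\asymp n$, so that the cited lemmas apply to the time-windowed samples of Algorithm~\ref{alg:CITS_sample}. That extra care is arguably what a self-contained proof should contain, and your handling of it (monotonicity of mixing coefficients in the gap, linear growth of $N$) is sound; the only minor overreach is invoking faithfulness to justify positive definiteness of the conditioning covariance submatrix---faithfulness concerns the CI/d-separation correspondence, whereas invertibility of $\Sigma_{22}$ follows more directly from nondegeneracy of the innovations in \eqref{eqdef: process}.
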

\begin{proof}
The proof directly follows from Lemmas 3.3 and 3.4 in \citep{biswas2022consistent}. 
\end{proof}
\begin{corollary}
Under Assumption \hyperref[rhoassumptions]{1} or Assumption \hyperref[alphaassumptions]{$1^*$}, using sample CITS based on either of these two conditional dependence tests: 1) partial correlation for Gaussian regime and 2) Hilbert-Schimdt conditional dependence criterion for non-Gaussian regime, leads to asymptotically accurate estimation of the DAG $G$ and Rolled DAG $G_R$ (see Theorem \ref{thm:main}). This is because, the sample z-transformed partial correlation for the Gaussian regime and the Hilbert-Schimdt conditional dependence criterion for the non-Gaussian regime, form consistent estimators to their corresponding conditional dependence measures (see Lemma \ref{lemma:consistency}).
\end{corollary}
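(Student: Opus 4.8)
The plan is to derive the corollary directly from Theorem \ref{thm:main} and Lemma \ref{lemma:consistency}, applied twice: once with the Fisher $z$-transformed partial correlation in the Gaussian regime and once with the Hilbert-Schmidt statistic in the non-Gaussian regime. Theorem \ref{thm:main} asks only two things of a candidate statistic used in CITS-sample (Algorithm \ref{alg:CITS_sample}): (i) its population version $\mu$ is a genuine conditional dependence measure, i.e. $\mu(X_{u,s},X_{v,t}\mid\bm{X}_S)=0$ precisely on the event $X_{u,s}\ind X_{v,t}\mid\bm{X}_S$; and (ii) the sample version is a consistent estimator of $\mu$. So the proposal reduces to checking (i) for the measures $z$ and $H$ and quoting Lemma \ref{lemma:consistency} for (ii).

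For (i): in the Gaussian regime, where every finite collection of the $X_{v,t}$ is jointly Gaussian, the length-$(2\tau+1)$ windowed vector is jointly Gaussian, so for any pair and any conditioning set the partial correlation --- hence its $z$-transform $z(X_{u,s},X_{v,2\tau+1}\mid\bm{X}_S)$ --- vanishes exactly when the corresponding conditional independence holds; in the non-Gaussian regime the Hilbert-Schmidt conditional dependence measure $H(X_{u,s},X_{v,2\tau+1}\mid\bm{X}_S)$, built from characteristic kernels, characterizes that conditional independence with no distributional assumption. For (ii): Lemma \ref{lemma:consistency} states that, under either Assumption \hyperref[rhoassumptions]{1} or Assumption \hyperref[alphaassumptions]{$1^*$}, $Z_n\to z$ in probability, and --- with the regularization schedule $n^{-1/3}\ll\epsilon_n\ll 1$ --- that $H_n\to H$ in probability, for each fixed tuple $(u,s,v,S)$. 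Since $p$ and $\tau$ are fixed, CITS-sample performs only finitely many such tests, so a union bound upgrades this pointwise convergence to simultaneous convergence of all the test statistics used by the algorithm. Substituting $(\hat\mu_n,\mu)=(Z_n,z)$, respectively $(\hat\mu_n,\mu)=(H_n,H)$, together with the fixed threshold $\gamma>0$ into Theorem \ref{thm:main} then delivers $\p(\hat G_n=G)\to 1$ and $\p(\hat G_{n,R}=G_R)\to 1$, which is the claimed conclusion.

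I do not expect the corollary itself to contain any real obstacle --- its whole content is the bookkeeping above, and the analytic substance has already been packed into Theorem \ref{thm:main} and Lemma \ref{lemma:consistency}. The one delicate point sits upstream in Lemma \ref{lemma:consistency}: the mixing hypotheses are placed on the raw bivariate series $\{X_{u,t},X_{v,t}\}$, but CITS-sample actually computes $Z_n$ and $H_n$ from the subsampled, time-windowed blocks $\chi_k$ of length $2\tau+1$, so one must know that partitioning the process into consecutive disjoint windows preserves $\rho$- (respectively $\alpha$-) mixing with an appropriately rescaled mixing-coefficient sequence; that transfer, and the resulting consistency, is what is delegated to Lemmas 3.3--3.4 of \citep{biswas2022consistent}.
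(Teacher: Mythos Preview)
Your proposal is correct and follows exactly the paper's own route: the corollary is not given a separate proof in the paper, and its justification is simply that $Z_n$ and $H_n$ are consistent estimators of the conditional dependence measures $z$ and $H$ by Lemma \ref{lemma:consistency}, which is precisely the hypothesis of Theorem \ref{thm:main}. Your write-up is in fact more thorough than the paper's (you explicitly verify that $z$ and $H$ characterize conditional independence, and you flag the mixing-under-windowing issue that is silently delegated to \citep{biswas2022consistent}); the union-bound remark is slightly redundant since it is already absorbed into the proof of Theorem \ref{thm:main}, but nothing is wrong.
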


\subsection{Edge weights}\label{sec:edgeweights}
In this section, we first assign an edge weight, denoted $w_{u,s}^{v,t}$, for the edge $X_{u,s}\rightarrow X_{v,t}$ in the Unrolled DAG estimate $\hat{G}$ obtained by CITS.
\begin{equation}\label{eq:causaleffectgaussian}
    w_{u,s}^{v,t}=\left\{\begin{array}{lr}
    0, & \text{ if } X_{u,s} \not\in pa_{\hat{G}}(X_{v,t}),\\
    \text{coefficient of }X_{u,s} \text{ in }X_{v,t} \sim pa_{\hat{G}}(X_{v,t}) & \text{ if } X_{u,s} \in pa_{\hat{G}}(X_{v,t})\end{array}\right.
\end{equation}
where $X_{v, t} \sim pa_{\hat{G}}(X_{v,t})$ is a shorthand for linear regression of $X_{v,t}$ on $pa_{\hat{G}}(X_{v,t}) = \{X_{a,b} : X_{a,b} \rightarrow X_{v,t} \text{ is an edge in } \hat{G}\}$. That is, the edge weights are obtained based on linear regressions corresponding to the estimated DAG of the SCM.

Next, we assign an edge \textit{weight} for the edge $u\rightarrow v$ in the Rolled graph estimate $\hat{G}_R$, following the way $\hat{G}_{R}$ is defined from $\hat{G}$ in the CITS algorithm: if nodes $u,v$ are connected as $u\rightarrow v$ in $\hat{G}_R$, then, its edge weight, denoted by $w_u^v$, is defined as the average of the edge weights: $w_{u,s}^{v,2\tau+1}$ for $X_{u,s}\in \text{pa}(X_{v,2\tau+1})$ in $\hat{G}$, $s=\tau+1,\ldots,2\tau$.

\section{Code Availability}
The software package for the methods in this paper and example code are available in \url{https://github.com/biswasr/cits}.

\bibliography{main}

\begin{thebibliography}{10}
\urlstyle{rm}
\expandafter\ifx\csname url\endcsname\relax
  \def\url#1{\texttt{#1}}\fi
\expandafter\ifx\csname urlprefix\endcsname\relax\def\urlprefix{URL }\fi
\expandafter\ifx\csname doiprefix\endcsname\relax\def\doiprefix{DOI: }\fi
\providecommand{\bibinfo}[2]{#2}
\providecommand{\eprint}[2][]{\url{#2}}

\bibitem{reid2019advancing}
\bibinfo{author}{Reid, A.~T.} \emph{et~al.}
\newblock \bibinfo{journal}{\bibinfo{title}{Advancing functional connectivity research from association to causation}}.
\newblock {\emph{\JournalTitle{Nature neuroscience}}} \textbf{\bibinfo{volume}{22}}, \bibinfo{pages}{1751--1760} (\bibinfo{year}{2019}).

\bibitem{biswas2022statistical1}
\bibinfo{author}{Biswas, R.} \& \bibinfo{author}{Shlizerman, E.}
\newblock \bibinfo{journal}{\bibinfo{title}{Statistical perspective on functional and causal neural connectomics: A comparative study}}.
\newblock {\emph{\JournalTitle{Frontiers in Systems Neuroscience}}} \textbf{\bibinfo{volume}{16}}, \bibinfo{pages}{817962} (\bibinfo{year}{2022}).

\bibitem{gokmenoglu2019time}
\bibinfo{author}{Gokmenoglu, K.}, \bibinfo{author}{Kirikkaleli, D.} \& \bibinfo{author}{Eren, B.~M.}
\newblock \bibinfo{journal}{\bibinfo{title}{Time and frequency domain causality testing: The causal linkage between fdi and economic risk for the case of turkey}}.
\newblock {\emph{\JournalTitle{The Journal of International Trade \& Economic Development}}} \textbf{\bibinfo{volume}{28}}, \bibinfo{pages}{649--667} (\bibinfo{year}{2019}).

\bibitem{kwon2011graphical}
\bibinfo{author}{Kwon, D.-H.} \& \bibinfo{author}{Bessler, D.~A.}
\newblock \bibinfo{journal}{\bibinfo{title}{Graphical methods, inductive causal inference, and econometrics: A literature review}}.
\newblock {\emph{\JournalTitle{Computational Economics}}} \textbf{\bibinfo{volume}{38}}, \bibinfo{pages}{85--106} (\bibinfo{year}{2011}).

\bibitem{barbero2018temperature}
\bibinfo{author}{Barbero, R.}, \bibinfo{author}{Westra, S.}, \bibinfo{author}{Lenderink, G.} \& \bibinfo{author}{Fowler, H.}
\newblock \bibinfo{journal}{\bibinfo{title}{Temperature-extreme precipitation scaling: a two-way causality?}}
\newblock {\emph{\JournalTitle{International Journal of Climatology}}} \textbf{\bibinfo{volume}{38}}, \bibinfo{pages}{e1274--e1279} (\bibinfo{year}{2018}).

\bibitem{massmann2021causal}
\bibinfo{author}{Massmann, A.}, \bibinfo{author}{Gentine, P.} \& \bibinfo{author}{Runge, J.}
\newblock \bibinfo{journal}{\bibinfo{title}{Causal inference for process understanding in earth sciences}}.
\newblock {\emph{\JournalTitle{arXiv preprint arXiv:2105.00912}}}  (\bibinfo{year}{2021}).

\bibitem{biswas2022statistical2}
\bibinfo{author}{Biswas, R.} \& \bibinfo{author}{Shlizerman, E.}
\newblock \bibinfo{journal}{\bibinfo{title}{Statistical perspective on functional and causal neural connectomics: The time-aware {PC} algorithm}}.
\newblock {\emph{\JournalTitle{PLOS Computational Biology}}} \textbf{\bibinfo{volume}{18}}, \bibinfo{pages}{1--27}, \doiprefix\url{10.1371/journal.pcbi.1010653} (\bibinfo{year}{2022}).

\bibitem{hassabis2017neuroscience}
\bibinfo{author}{Hassabis, D.}, \bibinfo{author}{Kumaran, D.}, \bibinfo{author}{Summerfield, C.} \& \bibinfo{author}{Botvinick, M.}
\newblock \bibinfo{journal}{\bibinfo{title}{Neuroscience-inspired artificial intelligence}}.
\newblock {\emph{\JournalTitle{Neuron}}} \textbf{\bibinfo{volume}{95}}, \bibinfo{pages}{245--258} (\bibinfo{year}{2017}).

\bibitem{finn2015functional}
\bibinfo{author}{Finn, E.~S.} \emph{et~al.}
\newblock \bibinfo{journal}{\bibinfo{title}{Functional connectome fingerprinting: identifying individuals using patterns of brain connectivity}}.
\newblock {\emph{\JournalTitle{Nature neuroscience}}} \textbf{\bibinfo{volume}{18}}, \bibinfo{pages}{1664--1671} (\bibinfo{year}{2015}).

\bibitem{ashraf2015cortical}
\bibinfo{author}{Ashraf, A.}, \bibinfo{author}{Fan, Z.}, \bibinfo{author}{Brooks, D.} \& \bibinfo{author}{Edison, P.}
\newblock \bibinfo{journal}{\bibinfo{title}{Cortical hypermetabolism in mci subjects: a compensatory mechanism?}}
\newblock {\emph{\JournalTitle{European journal of nuclear medicine and molecular imaging}}} \textbf{\bibinfo{volume}{42}}, \bibinfo{pages}{447--458} (\bibinfo{year}{2015}).

\bibitem{nakamura2017early}
\bibinfo{author}{Nakamura, A.} \emph{et~al.}
\newblock \bibinfo{journal}{\bibinfo{title}{Early functional network alterations in asymptomatic elders at risk for alzheimer’s disease}}.
\newblock {\emph{\JournalTitle{Scientific Reports}}} \textbf{\bibinfo{volume}{7}}, \bibinfo{pages}{6517} (\bibinfo{year}{2017}).

\bibitem{brier2014functional}
\bibinfo{author}{Brier, M.~R.} \emph{et~al.}
\newblock \bibinfo{journal}{\bibinfo{title}{Functional connectivity and graph theory in preclinical alzheimer's disease}}.
\newblock {\emph{\JournalTitle{Neurobiology of aging}}} \textbf{\bibinfo{volume}{35}}, \bibinfo{pages}{757--768} (\bibinfo{year}{2014}).

\bibitem{sheline2013resting}
\bibinfo{author}{Sheline, Y.~I.} \& \bibinfo{author}{Raichle, M.~E.}
\newblock \bibinfo{journal}{\bibinfo{title}{Resting state functional connectivity in preclinical alzheimer’s disease}}.
\newblock {\emph{\JournalTitle{Biological psychiatry}}} \textbf{\bibinfo{volume}{74}}, \bibinfo{pages}{340--347} (\bibinfo{year}{2013}).

\bibitem{molina2017causal}
\bibinfo{author}{Molina, J.-L.} \& \bibinfo{author}{Zazo, S.}
\newblock \bibinfo{journal}{\bibinfo{title}{Causal reasoning for the analysis of rivers runoff temporal behavior}}.
\newblock {\emph{\JournalTitle{Water Resources Management}}} \textbf{\bibinfo{volume}{31}}, \bibinfo{pages}{4669--4681} (\bibinfo{year}{2017}).

\bibitem{miersch2023identifying}
\bibinfo{author}{Miersch, P.}, \bibinfo{author}{Jiang, S.}, \bibinfo{author}{Rakovec, O.} \& \bibinfo{author}{Zscheischler, J.}
\newblock \bibinfo{title}{Identifying drivers of river floods using causal inference}.
\newblock \bibinfo{type}{Tech. Rep.}, \bibinfo{institution}{Copernicus Meetings} (\bibinfo{year}{2023}).

\bibitem{geweke1984inference}
\bibinfo{author}{Geweke, J.}
\newblock \bibinfo{journal}{\bibinfo{title}{Inference and causality in economic time series models}}.
\newblock {\emph{\JournalTitle{Handbook of econometrics}}} \textbf{\bibinfo{volume}{2}}, \bibinfo{pages}{1101--1144} (\bibinfo{year}{1984}).

\bibitem{dahlhaus2003causality}
\bibinfo{author}{Dahlhaus, R.} \& \bibinfo{author}{Eichler, M.}
\newblock \bibinfo{journal}{\bibinfo{title}{Causality and graphical models in time series analysis}}.
\newblock {\emph{\JournalTitle{Oxford Statistical Science Series}}} \bibinfo{pages}{115--137} (\bibinfo{year}{2003}).

\bibitem{eichler2010granger}
\bibinfo{author}{Eichler, M.} \& \bibinfo{author}{Didelez, V.}
\newblock \bibinfo{journal}{\bibinfo{title}{On granger causality and the effect of interventions in time series}}.
\newblock {\emph{\JournalTitle{Lifetime data analysis}}} \textbf{\bibinfo{volume}{16}}, \bibinfo{pages}{3--32} (\bibinfo{year}{2010}).

\bibitem{barnett2009granger}
\bibinfo{author}{Barnett, L.}, \bibinfo{author}{Barrett, A.~B.} \& \bibinfo{author}{Seth, A.~K.}
\newblock \bibinfo{journal}{\bibinfo{title}{Granger causality and transfer entropy are equivalent for gaussian variables}}.
\newblock {\emph{\JournalTitle{Physical review letters}}} \textbf{\bibinfo{volume}{103}}, \bibinfo{pages}{238701} (\bibinfo{year}{2009}).

\bibitem{granger2001essays}
\bibinfo{author}{Granger, C.~W.}
\newblock \emph{\bibinfo{title}{Essays in econometrics: collected papers of Clive WJ Granger}}, vol.~\bibinfo{volume}{32} (\bibinfo{publisher}{Cambridge University Press}, \bibinfo{year}{2001}).

\bibitem{shojaie2022granger}
\bibinfo{author}{Shojaie, A.} \& \bibinfo{author}{Fox, E.~B.}
\newblock \bibinfo{journal}{\bibinfo{title}{Granger causality: A review and recent advances}}.
\newblock {\emph{\JournalTitle{Annual Review of Statistics and Its Application}}} \textbf{\bibinfo{volume}{9}}, \bibinfo{pages}{289--319} (\bibinfo{year}{2022}).

\bibitem{chu2008search}
\bibinfo{author}{Chu, T.}, \bibinfo{author}{Glymour, C.} \& \bibinfo{author}{Ridgeway, G.}
\newblock \bibinfo{journal}{\bibinfo{title}{Search for additive nonlinear time series causal models.}}
\newblock {\emph{\JournalTitle{Journal of Machine Learning Research}}} \textbf{\bibinfo{volume}{9}} (\bibinfo{year}{2008}).

\bibitem{zaremba2022statistical}
\bibinfo{author}{Zaremba, A.~B.} \& \bibinfo{author}{Peters, G.~W.}
\newblock \bibinfo{journal}{\bibinfo{title}{Statistical causality for multivariate nonlinear time series via gaussian process models}}.
\newblock {\emph{\JournalTitle{Methodology and Computing in Applied Probability}}} \bibinfo{pages}{1--46} (\bibinfo{year}{2022}).

\bibitem{nishiyama2011consistent}
\bibinfo{author}{Nishiyama, Y.}, \bibinfo{author}{Hitomi, K.}, \bibinfo{author}{Kawasaki, Y.} \& \bibinfo{author}{Jeong, K.}
\newblock \bibinfo{journal}{\bibinfo{title}{A consistent nonparametric test for nonlinear causality—specification in time series regression}}.
\newblock {\emph{\JournalTitle{Journal of Econometrics}}} \textbf{\bibinfo{volume}{165}}, \bibinfo{pages}{112--127} (\bibinfo{year}{2011}).

\bibitem{spirtes2000causation}
\bibinfo{author}{Spirtes, P.}, \bibinfo{author}{Glymour, C.~N.}, \bibinfo{author}{Scheines, R.} \& \bibinfo{author}{Heckerman, D.}
\newblock \emph{\bibinfo{title}{Causation, prediction, and search}} (\bibinfo{publisher}{MIT press}, \bibinfo{year}{2000}).

\bibitem{peters2013causal}
\bibinfo{author}{Peters, J.}, \bibinfo{author}{Janzing, D.} \& \bibinfo{author}{Sch{\"o}lkopf, B.}
\newblock \bibinfo{journal}{\bibinfo{title}{Causal inference on time series using restricted structural equation models}}.
\newblock {\emph{\JournalTitle{Advances in neural information processing systems}}} \textbf{\bibinfo{volume}{26}} (\bibinfo{year}{2013}).

\bibitem{du2007structural}
\bibinfo{author}{du~Toit, S.~H.} \& \bibinfo{author}{Browne, M.~W.}
\newblock \bibinfo{journal}{\bibinfo{title}{Structural equation modeling of multivariate time series}}.
\newblock {\emph{\JournalTitle{Multivariate Behavioral Research}}} \textbf{\bibinfo{volume}{42}}, \bibinfo{pages}{67--101} (\bibinfo{year}{2007}).

\bibitem{holland1986statistics}
\bibinfo{author}{Holland, P.~W.}
\newblock \bibinfo{journal}{\bibinfo{title}{Statistics and causal inference}}.
\newblock {\emph{\JournalTitle{Journal of the American statistical Association}}} \textbf{\bibinfo{volume}{81}}, \bibinfo{pages}{945--960} (\bibinfo{year}{1986}).

\bibitem{rosenbaum1983central}
\bibinfo{author}{Rosenbaum, P.~R.} \& \bibinfo{author}{Rubin, D.~B.}
\newblock \bibinfo{journal}{\bibinfo{title}{The central role of the propensity score in observational studies for causal effects}}.
\newblock {\emph{\JournalTitle{Biometrika}}} \textbf{\bibinfo{volume}{70}}, \bibinfo{pages}{41--55} (\bibinfo{year}{1983}).

\bibitem{rubin1978bayesian}
\bibinfo{author}{Rubin, D.~B.}
\newblock \bibinfo{journal}{\bibinfo{title}{Bayesian inference for causal effects: The role of randomization}}.
\newblock {\emph{\JournalTitle{The Annals of statistics}}} \bibinfo{pages}{34--58} (\bibinfo{year}{1978}).

\bibitem{hood1953studies}
\bibinfo{author}{Hood, W.~C.}, \bibinfo{author}{Koopmans, T.~C.} \emph{et~al.}
\newblock \emph{\bibinfo{title}{Studies in econometric method}} (\bibinfo{publisher}{Wiley}, \bibinfo{year}{1953}).

\bibitem{haavelmo1943statistical}
\bibinfo{author}{Haavelmo, T.}
\newblock \bibinfo{journal}{\bibinfo{title}{The statistical implications of a system of simultaneous equations}}.
\newblock {\emph{\JournalTitle{Econometrica, Journal of the Econometric Society}}} \bibinfo{pages}{1--12} (\bibinfo{year}{1943}).

\bibitem{korda2016discrete}
\bibinfo{author}{Korda, A.~I.} \emph{et~al.}
\newblock \bibinfo{journal}{\bibinfo{title}{Discrete states of attention during active visual fixation revealed by markovian analysis of the time series of intrusive saccades}}.
\newblock {\emph{\JournalTitle{Neuroscience}}} \textbf{\bibinfo{volume}{339}}, \bibinfo{pages}{385--395} (\bibinfo{year}{2016}).

\bibitem{pham2017texture}
\bibinfo{author}{Pham, T.~D.}
\newblock \bibinfo{journal}{\bibinfo{title}{Texture classification and visualization of time series of gait dynamics in patients with neuro-degenerative diseases}}.
\newblock {\emph{\JournalTitle{IEEE Transactions on Neural Systems and Rehabilitation Engineering}}} \textbf{\bibinfo{volume}{26}}, \bibinfo{pages}{188--196} (\bibinfo{year}{2017}).

\bibitem{dias2015clustering}
\bibinfo{author}{Dias, J.~G.}, \bibinfo{author}{Vermunt, J.~K.} \& \bibinfo{author}{Ramos, S.}
\newblock \bibinfo{journal}{\bibinfo{title}{Clustering financial time series: New insights from an extended hidden markov model}}.
\newblock {\emph{\JournalTitle{European Journal of Operational Research}}} \textbf{\bibinfo{volume}{243}}, \bibinfo{pages}{852--864} (\bibinfo{year}{2015}).

\bibitem{peters2017elements}
\bibinfo{author}{Peters, J.}, \bibinfo{author}{Janzing, D.} \& \bibinfo{author}{Sch{\"o}lkopf, B.}
\newblock \emph{\bibinfo{title}{Elements of causal inference: foundations and learning algorithms}} (\bibinfo{publisher}{The MIT Press}, \bibinfo{year}{2017}).

\bibitem{li2017nonparametric}
\bibinfo{author}{Li, S.}, \bibinfo{author}{Ernest, J.} \& \bibinfo{author}{B{\"u}hlmann, P.}
\newblock \bibinfo{journal}{\bibinfo{title}{Nonparametric causal inference from observational time series through marginal integration}}.
\newblock {\emph{\JournalTitle{Econometrics and Statistics}}} \textbf{\bibinfo{volume}{2}}, \bibinfo{pages}{81--105} (\bibinfo{year}{2017}).

\bibitem{eichler2013causal}
\bibinfo{author}{Eichler, M.}
\newblock \bibinfo{journal}{\bibinfo{title}{Causal inference with multiple time series: principles and problems}}.
\newblock {\emph{\JournalTitle{Philosophical Transactions of the Royal Society A: Mathematical, Physical and Engineering Sciences}}} \textbf{\bibinfo{volume}{371}}, \bibinfo{pages}{20110613} (\bibinfo{year}{2013}).

\bibitem{kleinberg2013causality}
\bibinfo{author}{Kleinberg, S.}
\newblock \emph{\bibinfo{title}{Causality, probability, and time}} (\bibinfo{publisher}{Cambridge University Press}, \bibinfo{year}{2013}).

\bibitem{lowe2022amortized}
\bibinfo{author}{L{\"o}we, S.}, \bibinfo{author}{Madras, D.}, \bibinfo{author}{Zemel, R.} \& \bibinfo{author}{Welling, M.}
\newblock \bibinfo{title}{Amortized causal discovery: Learning to infer causal graphs from time-series data}.
\newblock In \emph{\bibinfo{booktitle}{Conference on Causal Learning and Reasoning}}, \bibinfo{pages}{509--525} (\bibinfo{organization}{PMLR}, \bibinfo{year}{2022}).

\bibitem{allen1983maintaining}
\bibinfo{author}{Allen, J.~F.}
\newblock \bibinfo{journal}{\bibinfo{title}{Maintaining knowledge about temporal intervals}}.
\newblock {\emph{\JournalTitle{Communications of the ACM}}} \textbf{\bibinfo{volume}{26}}, \bibinfo{pages}{832--843} (\bibinfo{year}{1983}).

\bibitem{euzenat2005time}
\bibinfo{author}{Euzenat, J.} \& \bibinfo{author}{Montanari, A.}
\newblock \bibinfo{title}{Time granularity} (\bibinfo{year}{2005}).

\bibitem{wei1982comment}
\bibinfo{author}{Wei, W.~W.}
\newblock \bibinfo{journal}{\bibinfo{title}{Comment: the effects of systematic sampling and temporal aggregation on causality—a cautionary note}}.
\newblock {\emph{\JournalTitle{Journal of the American Statistical Association}}} \textbf{\bibinfo{volume}{77}}, \bibinfo{pages}{316--319} (\bibinfo{year}{1982}).

\bibitem{weichwald_jcn}
\bibinfo{author}{Weichwald, S.} \& \bibinfo{author}{Peters, J.}
\newblock \bibinfo{journal}{\bibinfo{title}{Causality in cognitive neuroscience: Concepts, challenges, and distributional robustness}}.
\newblock {\emph{\JournalTitle{Journal of Cognitive Neuroscience}}} \textbf{\bibinfo{volume}{33}}, \bibinfo{pages}{226--247}, \doiprefix\url{10.1162/jocn_a_01623} (\bibinfo{year}{2021}).

\bibitem{etzion1998temporal}
\bibinfo{author}{Etzion, O.}, \bibinfo{author}{Jajodia, S.} \& \bibinfo{author}{Sripada, S.}
\newblock \emph{\bibinfo{title}{Temporal databases: research and practice}}, vol. \bibinfo{volume}{1399} (\bibinfo{publisher}{Springer Science \& Business Media}, \bibinfo{year}{1998}).

\bibitem{kulkarni2012temporal}
\bibinfo{author}{Kulkarni, K.} \& \bibinfo{author}{Michels, J.-E.}
\newblock \bibinfo{journal}{\bibinfo{title}{Temporal features in sql: 2011}}.
\newblock {\emph{\JournalTitle{ACM Sigmod Record}}} \textbf{\bibinfo{volume}{41}}, \bibinfo{pages}{34--43} (\bibinfo{year}{2012}).

\bibitem{date2014time}
\bibinfo{author}{Date, C.~J.}, \bibinfo{author}{Darwen, H.} \& \bibinfo{author}{Lorentzos, N.}
\newblock \emph{\bibinfo{title}{Time and relational theory: temporal databases in the relational model and SQL}} (\bibinfo{publisher}{Morgan Kaufmann}, \bibinfo{year}{2014}).

\bibitem{breitung2002temporal}
\bibinfo{author}{Breitung, J.} \& \bibinfo{author}{Swanson, N.~R.}
\newblock \bibinfo{journal}{\bibinfo{title}{Temporal aggregation and spurious instantaneous causality in multiple time series models}}.
\newblock {\emph{\JournalTitle{Journal of Time Series Analysis}}} \textbf{\bibinfo{volume}{23}}, \bibinfo{pages}{651--665} (\bibinfo{year}{2002}).

\bibitem{gong2017causal}
\bibinfo{author}{Gong, M.}, \bibinfo{author}{Zhang, K.}, \bibinfo{author}{Sch{\"o}lkopf, B.}, \bibinfo{author}{Glymour, C.} \& \bibinfo{author}{Tao, D.}
\newblock \bibinfo{title}{Causal discovery from temporally aggregated time series}.
\newblock In \emph{\bibinfo{booktitle}{Uncertainty in artificial intelligence: proceedings of the... conference. Conference on Uncertainty in Artificial Intelligence}}, vol. \bibinfo{volume}{2017} (\bibinfo{organization}{NIH Public Access}, \bibinfo{year}{2017}).

\bibitem{steinmetz2021neuropixels}
\bibinfo{author}{Steinmetz, N.~A.} \emph{et~al.}
\newblock \bibinfo{journal}{\bibinfo{title}{Neuropixels 2.0: A miniaturized high-density probe for stable, long-term brain recordings}}.
\newblock {\emph{\JournalTitle{Science}}} \textbf{\bibinfo{volume}{372}}, \bibinfo{pages}{eabf4588} (\bibinfo{year}{2021}).

\bibitem{steinmetz2018challenges}
\bibinfo{author}{Steinmetz, N.~A.}, \bibinfo{author}{Koch, C.}, \bibinfo{author}{Harris, K.~D.} \& \bibinfo{author}{Carandini, M.}
\newblock \bibinfo{journal}{\bibinfo{title}{Challenges and opportunities for large-scale electrophysiology with neuropixels probes}}.
\newblock {\emph{\JournalTitle{Current opinion in neurobiology}}} \textbf{\bibinfo{volume}{50}}, \bibinfo{pages}{92--100} (\bibinfo{year}{2018}).

\bibitem{camara2015bio}
\bibinfo{author}{C{\^a}mara, D.}
\newblock \emph{\bibinfo{title}{Bio-inspired networking}} (\bibinfo{publisher}{Elsevier}, \bibinfo{year}{2015}).

\bibitem{gabbiani2017mathematics}
\bibinfo{author}{Gabbiani, F.} \& \bibinfo{author}{Cox, S.~J.}
\newblock \emph{\bibinfo{title}{Mathematics for neuroscientists}} (\bibinfo{publisher}{Academic Press}, \bibinfo{year}{2017}).

\bibitem{buxton1998dynamics}
\bibinfo{author}{Buxton, R.~B.}, \bibinfo{author}{Wong, E.~C.} \& \bibinfo{author}{Frank, L.~R.}
\newblock \bibinfo{journal}{\bibinfo{title}{Dynamics of blood flow and oxygenation changes during brain activation: the balloon model}}.
\newblock {\emph{\JournalTitle{Magnetic resonance in medicine}}} \textbf{\bibinfo{volume}{39}}, \bibinfo{pages}{855--864} (\bibinfo{year}{1998}).

\bibitem{glover1999deconvolution}
\bibinfo{author}{Glover, G.~H.}
\newblock \bibinfo{journal}{\bibinfo{title}{Deconvolution of impulse response in event-related bold fmri1}}.
\newblock {\emph{\JournalTitle{Neuroimage}}} \textbf{\bibinfo{volume}{9}}, \bibinfo{pages}{416--429} (\bibinfo{year}{1999}).

\bibitem{bush2013decoding}
\bibinfo{author}{Bush, K.} \& \bibinfo{author}{Cisler, J.}
\newblock \bibinfo{journal}{\bibinfo{title}{Decoding neural events from fmri bold signal: a comparison of existing approaches and development of a new algorithm}}.
\newblock {\emph{\JournalTitle{Magnetic resonance imaging}}} \textbf{\bibinfo{volume}{31}}, \bibinfo{pages}{976--989} (\bibinfo{year}{2013}).

\bibitem{biswas2022consistent}
\bibinfo{author}{Biswas, R.} \& \bibinfo{author}{Mukherjee, S.}
\newblock \bibinfo{journal}{\bibinfo{title}{Consistent causal inference from time series with {PC} algorithm and its time-aware extension}}.
\newblock {\emph{\JournalTitle{Statistics and Computing}}} \textbf{\bibinfo{volume}{34}} (\bibinfo{year}{2024}).

\bibitem{heise1970causal}
\bibinfo{author}{Heise, D.~R.}
\newblock \bibinfo{journal}{\bibinfo{title}{Causal inference from panel data}}.
\newblock {\emph{\JournalTitle{Sociological methodology}}} \textbf{\bibinfo{volume}{2}}, \bibinfo{pages}{3--27} (\bibinfo{year}{1970}).

\bibitem{assaad2022entropy}
\bibinfo{author}{Assaad, C.~K.}, \bibinfo{author}{Devijver, E.} \& \bibinfo{author}{Gaussier, E.}
\newblock \bibinfo{journal}{\bibinfo{title}{Entropy-based discovery of summary causal graphs in time series}}.
\newblock {\emph{\JournalTitle{Entropy}}} \textbf{\bibinfo{volume}{24}}, \bibinfo{pages}{1156} (\bibinfo{year}{2022}).

\bibitem{valdes2011effective}
\bibinfo{author}{Valdes-Sosa, P.~A.}, \bibinfo{author}{Roebroeck, A.}, \bibinfo{author}{Daunizeau, J.} \& \bibinfo{author}{Friston, K.}
\newblock \bibinfo{journal}{\bibinfo{title}{Effective connectivity: influence, causality and biophysical modeling}}.
\newblock {\emph{\JournalTitle{Neuroimage}}} \textbf{\bibinfo{volume}{58}}, \bibinfo{pages}{339--361} (\bibinfo{year}{2011}).

\bibitem{glymour1991causal}
\bibinfo{author}{Glymour, C.}, \bibinfo{author}{Spirtes, P.} \& \bibinfo{author}{Scheines, R.}
\newblock \bibinfo{journal}{\bibinfo{title}{Causal inference}}.
\newblock {\emph{\JournalTitle{Erkenntnis}}} \textbf{\bibinfo{volume}{35}}, \bibinfo{pages}{151--189} (\bibinfo{year}{1991}).

\bibitem{kalisch2007estimating}
\bibinfo{author}{Kalisch, M.} \& \bibinfo{author}{B{\"u}hlman, P.}
\newblock \bibinfo{journal}{\bibinfo{title}{Estimating high-dimensional directed acyclic graphs with the {PC}-algorithm.}}
\newblock {\emph{\JournalTitle{Journal of Machine Learning Research}}} \textbf{\bibinfo{volume}{8}} (\bibinfo{year}{2007}).

\bibitem{verma2022equivalence}
\bibinfo{author}{Verma, T.~S.} \& \bibinfo{author}{Pearl, J.}
\newblock \bibinfo{title}{Equivalence and synthesis of causal models}.
\newblock In \emph{\bibinfo{booktitle}{Probabilistic and Causal Inference: The Works of Judea Pearl}}, \bibinfo{pages}{221--236} (\bibinfo{publisher}{Morgan \& Claypool Publishers}, \bibinfo{year}{2022}).

\bibitem{vsimundic2009measures}
\bibinfo{author}{{\v{S}}imundi{\'c}, A.-M.}
\newblock \bibinfo{journal}{\bibinfo{title}{Measures of diagnostic accuracy: basic definitions}}.
\newblock {\emph{\JournalTitle{Ejifcc}}} \textbf{\bibinfo{volume}{19}}, \bibinfo{pages}{203} (\bibinfo{year}{2009}).

\bibitem{hilden1996regret}
\bibinfo{author}{Hilden, J.} \& \bibinfo{author}{Glasziou, P.}
\newblock \bibinfo{journal}{\bibinfo{title}{Regret graphs, diagnostic uncertainty and youden's index}}.
\newblock {\emph{\JournalTitle{Statistics in medicine}}} \textbf{\bibinfo{volume}{15}}, \bibinfo{pages}{969--986} (\bibinfo{year}{1996}).

\bibitem{reid2012functional}
\bibinfo{author}{Reid, R.~C.}
\newblock \bibinfo{journal}{\bibinfo{title}{From functional architecture to functional connectomics}}.
\newblock {\emph{\JournalTitle{Neuron}}} \textbf{\bibinfo{volume}{75}}, \bibinfo{pages}{209--217} (\bibinfo{year}{2012}).

\bibitem{de2020large}
\bibinfo{author}{de~Vries, S.~E.} \emph{et~al.}
\newblock \bibinfo{journal}{\bibinfo{title}{A large-scale standardized physiological survey reveals functional organization of the mouse visual cortex}}.
\newblock {\emph{\JournalTitle{Nature neuroscience}}} \textbf{\bibinfo{volume}{23}}, \bibinfo{pages}{138--151} (\bibinfo{year}{2020}).

\bibitem{allenbrainobs}
\bibinfo{author}{Institute, A.}
\newblock \bibinfo{title}{Visual coding - neuropixels} (\bibinfo{year}{2023}).

\bibitem{neuropixels}
\bibinfo{author}{Jun, J.~J.} \emph{et~al.}
\newblock \bibinfo{journal}{\bibinfo{title}{Fully integrated silicon probes for high-density recording of neural activity}}.
\newblock {\emph{\JournalTitle{Nature}}} \textbf{\bibinfo{volume}{551}}, \bibinfo{pages}{232--236} (\bibinfo{year}{2017}).

\bibitem{dadgostar2016functional}
\bibinfo{author}{Dadgostar, M.}, \bibinfo{author}{Setarehdan, S.~K.}, \bibinfo{author}{Shahzadi, S.} \& \bibinfo{author}{Akin, A.}
\newblock \bibinfo{journal}{\bibinfo{title}{Functional connectivity of the pfc via partial correlation}}.
\newblock {\emph{\JournalTitle{Optik}}} \textbf{\bibinfo{volume}{127}}, \bibinfo{pages}{4748--4754} (\bibinfo{year}{2016}).

\bibitem{wang2016efficient}
\bibinfo{author}{Wang, Y.}, \bibinfo{author}{Kang, J.}, \bibinfo{author}{Kemmer, P.~B.} \& \bibinfo{author}{Guo, Y.}
\newblock \bibinfo{journal}{\bibinfo{title}{An efficient and reliable statistical method for estimating functional connectivity in large scale brain networks using partial correlation}}.
\newblock {\emph{\JournalTitle{Frontiers in neuroscience}}} \textbf{\bibinfo{volume}{10}}, \bibinfo{pages}{123} (\bibinfo{year}{2016}).

\bibitem{liu2010application}
\bibinfo{author}{Liu, T.}
\newblock \bibinfo{journal}{\bibinfo{title}{Application of markov chains to analyze and predict the time series}}.
\newblock {\emph{\JournalTitle{Modern Applied Science}}} \textbf{\bibinfo{volume}{4}}, \bibinfo{pages}{162} (\bibinfo{year}{2010}).

\bibitem{chen2012testing}
\bibinfo{author}{Chen, B.} \& \bibinfo{author}{Hong, Y.}
\newblock \bibinfo{journal}{\bibinfo{title}{Testing for the markov property in time series}}.
\newblock {\emph{\JournalTitle{Econometric Theory}}} \textbf{\bibinfo{volume}{28}}, \bibinfo{pages}{130--178} (\bibinfo{year}{2012}).

\bibitem{freedman2012markov}
\bibinfo{author}{Freedman, D.}
\newblock \emph{\bibinfo{title}{Markov chains}} (\bibinfo{publisher}{Springer Science \& Business Media}, \bibinfo{year}{2012}).

\bibitem{witt1998testing}
\bibinfo{author}{Witt, A.}, \bibinfo{author}{Kurths, J.} \& \bibinfo{author}{Pikovsky, A.}
\newblock \bibinfo{journal}{\bibinfo{title}{Testing stationarity in time series}}.
\newblock {\emph{\JournalTitle{physical Review E}}} \textbf{\bibinfo{volume}{58}}, \bibinfo{pages}{1800} (\bibinfo{year}{1998}).

\bibitem{priestley1969test}
\bibinfo{author}{Priestley, M.} \& \bibinfo{author}{Rao, T.~S.}
\newblock \bibinfo{journal}{\bibinfo{title}{A test for non-stationarity of time-series}}.
\newblock {\emph{\JournalTitle{Journal of the Royal Statistical Society: Series B (Methodological)}}} \textbf{\bibinfo{volume}{31}}, \bibinfo{pages}{140--149} (\bibinfo{year}{1969}).

\bibitem{vogt2012nonparametric}
\bibinfo{author}{Vogt, M.}
\newblock \emph{\bibinfo{title}{Nonparametric regression for locally stationary time series}} (\bibinfo{year}{2012}).

\bibitem{last2008detecting}
\bibinfo{author}{Last, M.} \& \bibinfo{author}{Shumway, R.}
\newblock \bibinfo{journal}{\bibinfo{title}{Detecting abrupt changes in a piecewise locally stationary time series}}.
\newblock {\emph{\JournalTitle{Journal of multivariate analysis}}} \textbf{\bibinfo{volume}{99}}, \bibinfo{pages}{191--214} (\bibinfo{year}{2008}).

\bibitem{tank2021neural}
\bibinfo{author}{Tank, A.}, \bibinfo{author}{Covert, I.}, \bibinfo{author}{Foti, N.}, \bibinfo{author}{Shojaie, A.} \& \bibinfo{author}{Fox, E.~B.}
\newblock \bibinfo{journal}{\bibinfo{title}{Neural granger causality}}.
\newblock {\emph{\JournalTitle{IEEE Transactions on Pattern Analysis and Machine Intelligence}}} \textbf{\bibinfo{volume}{44}}, \bibinfo{pages}{4267--4279} (\bibinfo{year}{2021}).

\bibitem{pearl2009causality}
\bibinfo{author}{Pearl, J.}
\newblock \emph{\bibinfo{title}{Causality}} (\bibinfo{publisher}{Cambridge university press}, \bibinfo{year}{2009}).

\bibitem{spirtes1999algorithm}
\bibinfo{author}{Spirtes, P.}, \bibinfo{author}{Meek, C.} \& \bibinfo{author}{Richardson, T.}
\newblock \bibinfo{journal}{\bibinfo{title}{An algorithm for causal inference in the presence of latent variables and selection bias}}.
\newblock {\emph{\JournalTitle{Computation, causation, and discovery}}} \textbf{\bibinfo{volume}{21}}, \bibinfo{pages}{1--252} (\bibinfo{year}{1999}).

\bibitem{hauser2012characterization}
\bibinfo{author}{Hauser, A.} \& \bibinfo{author}{B{\"u}hlmann, P.}
\newblock \bibinfo{journal}{\bibinfo{title}{Characterization and greedy learning of interventional markov equivalence classes of directed acyclic graphs}}.
\newblock {\emph{\JournalTitle{The Journal of Machine Learning Research}}} \textbf{\bibinfo{volume}{13}}, \bibinfo{pages}{2409--2464} (\bibinfo{year}{2012}).

\bibitem{entner2010causal}
\bibinfo{author}{Entner, D.} \& \bibinfo{author}{Hoyer, P.~O.}
\newblock \bibinfo{journal}{\bibinfo{title}{On causal discovery from time series data using fci}}.
\newblock {\emph{\JournalTitle{Probabilistic graphical models}}} \bibinfo{pages}{121--128} (\bibinfo{year}{2010}).

\bibitem{runge2019detecting}
\bibinfo{author}{Runge, J.}, \bibinfo{author}{Nowack, P.}, \bibinfo{author}{Kretschmer, M.}, \bibinfo{author}{Flaxman, S.} \& \bibinfo{author}{Sejdinovic, D.}
\newblock \bibinfo{journal}{\bibinfo{title}{Detecting and quantifying causal associations in large nonlinear time series datasets}}.
\newblock {\emph{\JournalTitle{Science Advances}}} \textbf{\bibinfo{volume}{5}}, \bibinfo{pages}{eaau4996} (\bibinfo{year}{2019}).

\bibitem{jangyodsuk2014causal}
\bibinfo{author}{Jangyodsuk, P.}, \bibinfo{author}{Seo, D.-J.} \& \bibinfo{author}{Gao, J.}
\newblock \bibinfo{title}{Causal graph discovery for hydrological time series knowledge discovery} (\bibinfo{year}{2014}).

\bibitem{lorch2022amortized}
\bibinfo{author}{Lorch, L.}, \bibinfo{author}{Sussex, S.}, \bibinfo{author}{Rothfuss, J.}, \bibinfo{author}{Krause, A.} \& \bibinfo{author}{Sch{\"o}lkopf, B.}
\newblock \bibinfo{journal}{\bibinfo{title}{Amortized inference for causal structure learning}}.
\newblock {\emph{\JournalTitle{arXiv preprint arXiv:2205.12934}}}  (\bibinfo{year}{2022}).

\bibitem{kolmogorov1960strong}
\bibinfo{author}{Kolmogorov, A.~N.} \& \bibinfo{author}{Rozanov, Y.~A.}
\newblock \bibinfo{journal}{\bibinfo{title}{On strong mixing conditions for stationary gaussian processes}}.
\newblock {\emph{\JournalTitle{Theory of Probability \& Its Applications}}} \textbf{\bibinfo{volume}{5}}, \bibinfo{pages}{204--208} (\bibinfo{year}{1960}).

\bibitem{bradley2005basic}
\bibinfo{author}{Bradley, R.~C.}
\newblock \bibinfo{journal}{\bibinfo{title}{Basic properties of strong mixing conditions. a survey and some open questions}}.
\newblock {\emph{\JournalTitle{Probability surveys}}} \textbf{\bibinfo{volume}{2}}, \bibinfo{pages}{107--144} (\bibinfo{year}{2005}).

\bibitem{masry2011estimation}
\bibinfo{author}{Masry, E.}
\newblock \bibinfo{journal}{\bibinfo{title}{The estimation of the correlation coefficient of bivariate data under dependence: Convergence analysis}}.
\newblock {\emph{\JournalTitle{Statistics \& Probability Letters}}} \textbf{\bibinfo{volume}{81}}, \bibinfo{pages}{1039--1045} (\bibinfo{year}{2011}).

\bibitem{lauritzen1996graphical}
\bibinfo{author}{Lauritzen, S.~L.}
\newblock \emph{\bibinfo{title}{Graphical models}}, vol.~\bibinfo{volume}{17} (\bibinfo{publisher}{Clarendon Press}, \bibinfo{year}{1996}).

\bibitem{bollen1989structural}
\bibinfo{author}{Bollen, K.~A.}
\newblock \emph{\bibinfo{title}{Structural equations with latent variables}}, vol. \bibinfo{volume}{210} (\bibinfo{publisher}{John Wiley \& Sons}, \bibinfo{year}{1989}).

\bibitem{muirhead2009aspects}
\bibinfo{author}{Muirhead, R.~J.}
\newblock \emph{\bibinfo{title}{Aspects of multivariate statistical theory}} (\bibinfo{publisher}{John Wiley \& Sons}, \bibinfo{year}{2009}).

\bibitem{fukumizu2007kernel}
\bibinfo{author}{Fukumizu, K.}, \bibinfo{author}{Gretton, A.}, \bibinfo{author}{Sun, X.} \& \bibinfo{author}{Sch{\"o}lkopf, B.}
\newblock \bibinfo{journal}{\bibinfo{title}{Kernel measures of conditional dependence}}.
\newblock {\emph{\JournalTitle{Advances in neural information processing systems}}} \textbf{\bibinfo{volume}{20}} (\bibinfo{year}{2007}).

\end{thebibliography}

\appendix
\section{Proofs of the Main Results}
\subsection{Proof of Lemma \ref{lemma:concept}}\label{proofLem3.1}
To see that (1) implies (2), let (2) be false, i.e. $X_{v,t}$ and $X_{u,s}$ are adjacent in $G$. Then by the time order it follows that $X_{u,s}\rightarrow X_{v,t}$, which implies (1) is false. 
    
(2) implies (1) holds trivially. 
    
Next, suppose that (2) holds, i.e. $X_{v,t}$ and $X_{u,s}$ are non-adjacent in $G$. Then $X_{v,t}$ and $X_{u,s}$ are d-separated by the set of their parents in $G$. Next, note that the parents of $X_{v,t}$ and $X_{u,s}$ are between times $\{t-\tau,\ldots,t-1\}$ and $\{s-\tau,\ldots,s-1\}$ respectively and hence, the parents of both $X_{v,t}$ and $X_{u,s}$ are within times $\{t-2\tau,\ldots,t-1\}$. This implies that $X_{v,t}$ and $X_{u,s}$ are d-separated by $S_0:= \text{pa}(v,t)\cup \text{pa}(u,s) \subseteq \{(d,r):d=1,\ldots,p; r= t-2\tau,\ldots,t-1\}$. Since the SCM implies directed Markov property \citep{lauritzen1996graphical,bollen1989structural}, so it follows that $X_{v,t}\ind X_{u,s} \vert \bm{X}_{S_0}$. We thus showed that (2) implies (3).

Finally, let (3) hold. Then under faithfulness, it follows that $X_{v,t}$ and $X_{u,s}$ are d-separated in $G$ by $S$ which implies (2). 

\subsection{Proof of Theorem \ref{thm:main}}\label{proofTh4.1}
For a pair of nodes $u,v=1,\ldots,p$ and times $s=\tau+1,\ldots,2\tau+1$ and a conditioning set $S\subseteq \mathbb{S}:=\{(d,r):d=1,\ldots,p;r=1,\ldots,2\tau+1\}\setminus\{(u,s),(v,2\tau+1)\}$, let $E_{u,v,s\vert S}$ denote an error event that occurred when testing conditional dependence of $X_{v,2\tau+1}\ind X_{u,s}\vert \bm{X}_S$, i.e.,

\[
E_{u,v,s\vert S} = E_{u,v,s\vert S}^{I} \cup  E_{u,v,s\vert S}^{II},
\]

where
\[
E_{u,v,s\vert S}^{I} := \{|\hat{\mu}_n(X_{u,s},X_{v,2\tau+1}\vert\bm{X}_S)| > \gamma \text{ and } \mu(X_{u,s},X_{v,2\tau+1}\vert\bm{X}_S) = 0\}\]
\[E_{u,v,s\vert S}^{II} := \{ |\hat{\mu}_n(X_{u,s},X_{v,2\tau+1}\vert\bm{X}_S)| \le \gamma \text{ and } \mu(X_{u,s},X_{v,2\tau+1}\vert\bm{X}_S) \neq 0\}
\] denote the events of Type I error and Type II error, respectively. Thus,

\begin{align*}
P(\text{an error occurs in the sample CITS algorithm})&\leq P\left(\bigcup_{u,v,s,S\subseteq \mathbb{S}}E_{u,v,s\vert S}\right)\\
& \leq O(1)\sup_{u,v,s,S\subseteq\mathbb{S}}P(E_{u,v,s\vert S}) \numberthis\label{eqn:skelerror}   
\end{align*}
using that the cardinality of the set $\vert \{u,v,s,S\subseteq \mathbb{S}\} \vert = p^2\tau 2^{2p\tau-2}$. Then, for any $\gamma > 0$, we have:

\begin{equation}\label{t1err}
    \sup_{u,v,s,S\subseteq\mathbb{S}} P(E_{u,v,s\vert S}^{I}) \le \sup_{u,v,s,S\subseteq \mathbb{S}} P(\vert \hat{\mu}_n(X_{u,s},X_{v,2\tau+1}\vert\bm{X}_S) - \mu(X_{u,s},X_{v,2\tau+1}\vert\bm{X}_S)\vert > \gamma) = o(1)
\end{equation}
by the consistency of $\hat{\mu}_n$.

Next, we bound the type II error probability. Towards this,  let $c=\inf \{\vert\mu(X_{u,s},X_{v,t}\vert\bm{X}_S)\vert : \mu(X_{u,s},X_{v,t}\vert\bm{X}_S)\neq 0, u,v=1,\ldots,p;s=\tau+1,\ldots,2\tau+1;S\subseteq \{(d,r):d=1,\ldots,p;r=1,\ldots,2\tau+1\}\setminus\{(u,s),(v,2\tau+1)\}\}>0$, and choose $\gamma = c/2$. Then,
	
\begin{align}
    \sup_{u,v,s,S\subseteq\mathbb{S}} P(E_{u,v,s\vert S}^{II}) &= \sup_{u,v,s,S\subset \mathbb{S}} P(\vert \hat{\mu}_n(X_{u,s},X_{v,2\tau+1}|S)\vert \leq \gamma ~, ~\mu(X_{u,s},X_{v,2\tau+1}|S)\neq 0)\nonumber\\
&\leq \sup_{u,v,s,S\subset \mathbb{S}} P(\vert \hat{\mu}_n(X_{u,s},X_{v,2\tau+1}|S) - \mu(X_{u,s},X_{v,2\tau+1}|S) \vert \ge c/2))\nonumber\\ &=o(1)\label{eqn:skeltype2err}.
\end{align}

It follows from \eqref{eqn:skelerror}, \eqref{t1err} and \eqref{eqn:skeltype2err}, that:
\begin{equation}\label{eq: main_last}
P(\text{an error occurs in the sample CITS algorithm})\rightarrow 0
\end{equation}
The event of no error ocurring in the sample CITS algorithm is same as the event that the outcome of sample CITS would be same as the CITS-Oracle which has knowledge of conditional independence information. By Theorem \ref{thm:CITS_oracle}, the CITS-Oracle outputs the true DAG $G$ and its Rolled DAG $G_R$. In summary, the event of no error occurring in the sample CITS algorithm implies is equivalent to stating that $\hat{G}=G$ and $\hat{G}_R=G_R$. The proof of Theorem \ref{thm:main} is now complete, in view of \eqref{eq: main_last}.

\section{Choice of Conditional Dependence Tests}\label{appsec:condindtests}
In this section, we describe some choices of the conditional dependence tests used in the CITS algorithm. According to Theorem \ref{thm:main}, we can use a conditional dependence test of the following form \eqref{eq:cond_test} in the CITS algorithm, 
\begin{equation}\label{eq:cond_test}
\vert\hat{\mu}_n(X_{u,s},X_{v,t}\vert\bm{X}_S)\vert>\gamma
\end{equation} to guarantee its consistency, as long as it satisfies the condition: $\hat{\mu}_n(X_{u,s},X_{v,t}\vert\bm{X}_S)$ is a consistent estimator of $\mu(X_{u,s},X_{v,t}\vert\bm{X}_S)$, the latter being a measure of conditional dependence of $X_{u,s}$ and $X_{v,t}$ given $\bm{X}_S$. In the following sections, we provide examples of such candidates for $\hat{\mu}_n(X_{u,s},X_{v,t}\vert\bm{X}_S)$ and resulting conditional dependence tests, in both the Gaussian as well as the non-Gaussian regime.
\subsection{The Gaussian Regime: Pearson's Partial Correlations}\label{sec:samplepc} 
It is popular to use partial correlations to test conditional dependence for causal inference, such as in the PC algorithm \citep{kalisch2007estimating}. The partial correlation-based conditional dependence test is applicable in the Gaussian setting. Assume that $\mathbf{Y}=(Y_1,\ldots,Y_p)$ is a $p$-dimensional Gaussian random vector, for some fixed integer $p$. For $i \neq j \in \{1,\ldots ,p\},\ \bm{k} \subseteq \{1,\ldots ,p\}
  \setminus 
\{i,j\}$, denote by $\rho(Y_i,Y_j|Y_{\bm{k}})$ the 
partial correlation between $Y_i$ and $Y_j$ given
$\{Y_r:\ r \in \bm{k}\}$. The partial correlation serves as a measure of conditional dependence in the Gaussian regime in view of the following standard property of the multivariate Gaussian distribution (see Prop. 5.2 in \citep{lauritzen1996graphical}),

\[\rho(Y_i,Y_j|Y_{\bm{k}})=0\text{ if and only if }Y_i\ind Y_j ~\vert~ \{Y_r~:\ r \in \bm{k}\}.\] 

\medskip
Denote $k=\vert \bm k\vert$ and let without loss of generality $\{Y_r;\ r \in \bm{k}\}$ be the last $k$ entries in $\bm{Y}$. Let $\Sigma:= \text{cov}(\bm{Y})$ with $\Sigma = \left(\begin{array}{cc}
    \Sigma_{11} & \Sigma_{12} \\
    \Sigma_{21} & \Sigma_{22}
\end{array}\right)$ where $\Sigma_{11}$ is of dimension $(p-k)\times (p-k)$, $\Sigma_{22}$ is of dimension $k \times k$, and $\Sigma_{11.2}=\Sigma_{11}-\Sigma_{12}\Sigma_{22}^{-1}\Sigma_{21}$. Let $\bm e_1,\ldots,\bm e_p$ be the canonical basis vectors of $\mathbb{R}^p$. It follows from \citep{muirhead2009aspects} that,
\begin{align*}
\rho(Y_i,Y_j|Y_{\bm{k}}) &= \frac{\bm{e}_i^{\top}\Sigma_{11.2}\bm{e}_j}{\sqrt{(\bm{e}_i^{\top}\Sigma_{11.2}\bm{e}_i(\bm{e}_j^{\top}\Sigma_{11.2}\bm{e}_j))}}
\end{align*}
One can calculate the sample partial correlation
$\hat{\rho}(Y_i,Y_j|Y_{\bm{k}})$ via regression or by using the following identity, with  $\hat{\Sigma}$ and $\hat{\Sigma}_{11.2}$ being the sample versions of $\Sigma$ and $\Sigma_{11.2}$.
\begin{align*}
\hat{\rho}(Y_i,Y_j|Y_{\bm{k}}) &= \frac{\bm{e}_i^{\top}\hat{\Sigma}_{11.2}\bm{e}_j}{\sqrt{(\bm{e}_i^{\top}\hat{\Sigma}_{11.2}\bm{e}_i)(\bm{e}_j^{\top}\hat{\Sigma_{11.2}}\bm{e}_j)}}
\end{align*}
For testing whether a partial correlation is zero or not, we first apply Fisher's z-transform
\begin{eqnarray}\label{ztrans}
Z_n(Y_i,Y_j|Y_{\bm{k}}):= g(\hat{\rho}(Y_i,Y_j|Y_{\bm{k}})) :=\frac{1}{2} \log \left (\frac{1 +
    \hat{\rho}(Y_i,Y_j|Y_{\bm{k}})}{1 - \hat{\rho}(Y_i,Y_j|Y_{\bm{k}})} \right).
\end{eqnarray}

Also, let $z(Y_i,Y_j|Y_{\bm{k}}) = g(\rho(Y_i,Y_j|Y_{\bm{k}}))$. Note that $z(Y_i,Y_j|Y_{\bm{k}})=0 \Leftrightarrow \rho(Y_i,Y_j|Y_{\bm{k}})=0$, and hence, $z(Y_i,Y_j|Y_{\bm{k}}) = 0 \Leftrightarrow Y_i\ind Y_j ~\vert~ \{Y_r~:\ r \in \bm{k}\}$. That is, $z(Y_i,Y_j|Y_{\bm{k}})$ is also a measure of conditional dependence. Furthermore, $Z_n(Y_i,Y_j|Y_{\bm{k}})$ is a consistent estimator of the conditional dependence measure $z(Y_i,Y_j|Y_{\bm{k}})$ (See Lemma \ref{lemma:consistency}). Hence, it can be used to construct a conditional dependence test of the form $Z_n(Y_i,Y_j|Y_{\bm{k}}) > \gamma$ for some fixed $\gamma>0$, for the CITS in the Gaussian regime.

\subsection{The Non-Gaussian Regime: The Hilbert Schmidt Criterion.}\label{sec:samplepch}

In the general non-Gaussian scenario, zero partial correlations do not necessarily imply conditional independence. In such cases, the Hilbert Schmidt criterion can be used as a convenient test for conditional dependence, which is described in more details below.

Given $\mathbb{R}$-valued random variables $X,Y$ and the random vector $\bm Z$, we propose to use the following statistic for testing the conditional dependence of $X,Y\vert \bm{Z}$ (see \citep{fukumizu2007kernel}):

\[
H_{n}(X,Y\vert\bm{Z}) = Tr[R_{\overset{..}{Y}}R_{\overset{..}{X}}-2R_{\overset{..}{Y}}R_{\overset{..}{X}}R_{\bm Z} + R_{\overset{..}{Y}}R_{\bm Z}R_{\overset{..}{X}}R_{\bm Z}]
\]
where $R_X := G_X(G_X+n\epsilon_n I_n)^{-1}, R_Y := G_Y(G_Y+n\epsilon_n I_n)^{-1}, R_{\bm Z} := G_{\bm Z}(G_{\bm Z}+n\epsilon_n I_n)^{-1}$, $G_X,G_Y,G_{\bm Z}$ being the centered Gram matrices with respect to a positive definite and integrable kernel $k$, i.e. $G_{X,ij}=<k(\cdot,X_i)-\hat{m}_X^{(n)},k(\cdot,X_j)-\hat{m}_X^{(n)}>$ with $\hat{m}_X^{(n)}=\frac{1}{n}\sum_{i=1}^n k(\cdot,X_i)$, and $\overset{..}{X} := (X,{\bm Z}), \overset{..}{Y} := (Y,{\bm Z})$. Under some regularity assumptions mentioned below, it follows from the proof of Theorem 5 in \citep{fukumizu2007kernel} that $\hat{H}_n(X,Y\vert \bm{Z})$ is a consistent estimator of $H(X,Y\vert \bm Z):= \|V_{\overset{..}{X}\overset{..}{Y}|{\bm Z}}\|^2$, where $$V_{\overset{..}{X}\overset{..}{Y}|{\bm Z}} := \Sigma_{\overset{..}{X}\overset{..}{X}}^{-1/2}(\Sigma_{\overset{..}{X}\overset{..}{Y}} - \Sigma_{\overset{..}{X}\bm Z} \Sigma_{\bm Z \bm Z}^{-1}\Sigma_{\bm Z \overset{..}{Y}}) \Sigma_{\overset{..}{Y} \overset{..}{Y}}^{-1/2}$$ and $\Sigma_{UV}$ denotes the covariance matrix of $U$ and $V$. It follows from \citep{fukumizu2007kernel} that $X\ind Y |\bm Z \Leftrightarrow H(X,Y|\bm Z)=0$, i.e. $H(X,Y|\bm Z)$ is a measure of conditional dependence of $X$ and $Y$ given $Z$. Also, it follows from Lemma \ref{lemma:consistency} that $H_n(X,Y| \bm Z)$ is a consistent estimator to $H(X,Y|\bm Z)$, and hence, $H_n(X,Y| \bm Z)$ can be used to construct a conditional dependence test in the CITS of the form: $\vert H_n(X,Y| \bm Z)\vert > \gamma$ for a fixed $\gamma>0$. This method has the advantage that unlike the Pearson partial correlation, it does not require Gaussianity of the data to decide conditional independence, and consequently can be used in CITS if the underlying time series is non-Gaussian.

\section{Two Notions of Mixing}\label{defmixing}
For fixed $u,v \in 1,\ldots, p$, let $\mathcal{F}_{a}^{b}$ be the $\sigma$-field of events generated by the random variables $\{X_{u,t},X_{v,t}:a\leq t \leq b\}$, and $L_2(\mathcal{F}_a^b)$ be the collection of all second-order random variables which are $\mathcal{F}_a^b$-measurable.

\begin{definition}[$\rho$-mixing]
In this section, we describe two common notions of mixing, that we are going to assume on our underlying time series, in order to guarantee consistency of the conditional dependence estimators. The stationary process $\{X_{u,t},X_{v,t}:t= 1,2,\ldots\}$ is called $\rho$-mixing if the maximal correlation coefficient

\begin{equation}\label{eq:rhomixing}
\xi_{uv}(k):=\sup_{\ell\geq 1}\sup_{\substack{U\in L_2(\mathcal{F}_{1}^\ell)\\ V\in L_2(\mathcal{F}_{\ell+k}^{\infty})}} \frac{\vert \text{cov}(U,V) \vert}{\text{var}^{1/2}(U)\text{var}^{1/2}(V)} \rightarrow 0 \text{ as } k\rightarrow \infty.
\end{equation}
\end{definition}

\begin{definition}[$\alpha$-mixing]
The stationary process $\{X_{u,t},X_{v,t}:t= 1,2,\ldots\}$ is called $\alpha$-mixing if:

\[
\alpha_{uv}(k):=\sup_{\ell\geq 1}\sup_{\substack{A\in L_2(\mathcal{F}_{1}^\ell)\\ B\in L_2(\mathcal{F}_{\ell+k}^{\infty})}} |P(A\cap B) - P(A) P(B)| \rightarrow 0 \text{ as } k\rightarrow \infty.
\]
\end{definition}
\section{Simulation Study Details}\label{appen:simulstudy}
We outline the different simulation settings used in the numerical experiments.

\noindent\textbf{Linear Additive with Gaussian noise:}
\begin{itemize}
    \item Linear Gaussian Model 1: 
    $$(X_{1,t},X_{2,t},X_{3,t},X_{4,t}) = ( 1+\epsilon_{1,t},~-1+\epsilon_{2,t},~2X_{1,t-1}-X_{2,t-1}+\epsilon_{3,t}, ~2X_{3,t-1}+\epsilon_{4,t})$$
     where $\epsilon_{i,t}, i=1,\ldots,4, t=1,2,\ldots,1000, \sim$ i.i.d. $N(0,\eta)$ with mean $0$ and standard deviation $\eta$. In this and all subsequent examples, the parameter $\eta$ is assumed to vary between $0$ to $3.5$ in increments of $0.5$ in our numerical experiments. The true Rolled graph $G_R$ for this model has the edges $1\rightarrow 3, 2\rightarrow 3, 3\rightarrow 4$.
    \item  Linear Gaussian Model 2: 
    $$(X_{1,t},X_{2,t},X_{3,t},X_{4,t}) = ( 1+\epsilon_{1,t},~-1+2X_{1,t-1}+\epsilon_{2,t},~2X_{1,t-1}+\epsilon_{3,t}, ~X_{2,t-1} + X_{3,t-1} +\epsilon_{4,t})$$
   where $\epsilon_{i,t}, i=1,\ldots,4, t=1,2,\ldots,1000, \sim$ i.i.d. $N(0,\eta)$ with mean $0$ and standard deviation $\eta$. The true Rolled graph $G_R$ for this model has the edges $1\rightarrow 2, 1\rightarrow 3, 2\rightarrow 4, 3\rightarrow 4$.
   \end{itemize}
    
    \noindent\textbf{Non-linear Additive with Non-Gaussian noise:}
    \begin{itemize}
    \item Non-linear Non-Gaussian Model 1:     
    $$(X_{1,t},X_{2,t},X_{3,t},X_{4,t}) = (\epsilon_{1,t},~\epsilon_{2,t},~4\sin(X_{1,t-1})-3\sin(X_{2,t-1})+\epsilon_{3,t},~ 3 X_{3,t-1} +\epsilon_{4,t})$$
   where $\epsilon_{i,t}, i=1,\ldots,4, t=1,2,\ldots,1000, \sim$ i.i.d. and uniformly distributed on the interval $(0,\eta)$. The true Rolled graph $G_R$ for this model has the edges $1\rightarrow 3, 2\rightarrow 3, 3\rightarrow 4$. 
    
    \item Non-linear Non-Gaussian Model 2:  
    $$(X_{1,t},X_{2,t},X_{3,t},X_{4,t}) = ( \epsilon_{1,t},~4X_{1,t-1}+\epsilon_{2,t},~3\sin(X_{1,t-1})+\epsilon_{3,t}, ~8\log(\vert X_{2,t-1}\vert) + 9\log(\vert X_{3,t-1}\vert) +\epsilon_{4,t})$$
   where $\epsilon_{i,t}, i=1,\ldots,4, t=1,2,\ldots,1000, \sim$ i.i.d. Uniform distribution on the interval $(0,\eta)$. The true Rolled graph $G_R$ for this model has the edges $1\rightarrow 2, 1\rightarrow 3, 2\rightarrow 4, 3\rightarrow 4$.
\end{itemize}

   \noindent \textbf{Non-linear Non-additive Continuous Time Recurrent Neural Network with Gaussian noise:}
   \begin{itemize}
   \item Continuous Time Recurrent Neural Network (CTRNN) Model:  \begin{equation}\label{ctrnn}
    \tau_j \frac{dX_{j,t}}{dt}=-X_{j,t} + \sum_{i=1}^m w_{ij} \sigma (X_{i,t}) + \epsilon_{j,t}, j=1,\ldots, m,
    \end{equation}
    We consider a motif consisting of $m=4$ and $w_{13}=w_{23}=w_{34}=10$ and $w_{ij}=0$ otherwise. We also note that in Eq. \eqref{ctrnn}, $X_{j,t}$ depends on its own past. Therefore, the true Rolled graph has the edges $1\rightarrow 3,2\rightarrow 3,3\rightarrow 4, 1\rightarrow 1, 2\rightarrow 2, 3\rightarrow 3, 4\rightarrow 4$. The time constant $\tau_i$ is set to 10 units for each $i$. We consider $\epsilon_{i,t}$ to be distributed as an independent Gaussian process with mean $1$ and standard deviation $\eta$. The signals are sampled at a time gap of $g := \exp(1) \approx 2.72$ units for a total duration of $1000$ units. For simulation purposes, one may replace the continuous derivative on the left hand side of Eq. \eqref{ctrnn} is replaced by first order differences at a gap of $g$. 
\end{itemize}
\section{Description of the Visual Coding Neuropixels Dataset}\label{datadescriptionfull}

In this section, we provide more details on the neuropixels dataset considered in Section \ref{sec:application}. 

\subsection{The Dataset} We restrict our analysis to a 116 days old male mouse (Session ID 791319847), having 555 neurons recorded simultaneously by six Neuropixel probes. The spike trains for this experiment were recorded at a frequency of 1 KHz. The spike trains of this mouse are then studied across four types of stimuli: natural scenes, static gratings, Gabor patches, and full-field flashes. 

The set of stimuli ranges from natural scenes that evoke a mouse's natural habitat, to artificial stimuli such as static gratings, Gabor patches, and full-field flashes. Static gratings consist of sinusoidal patches, Gabor patches consist of sinusoidal patches with decreasing luminosity as the distance from the center increases, and full-field flashes consist of sudden changes in luminosity across the entire visual field. For more details on each of these four types of stimuli, see Section \ref{datadescription}. By using these four stimuli, we aim to investigate how they elicit different patterns of neuronal connectivity. Dynamic stimuli such as natural movies and drifting gratings are excluded from our analysis, since their results need a more nuanced study, which we plan to conduct in the future.

\subsection{Preprocessing} We transformed the recorded spike trains from 1 KHz to a bin size of 10 ms by grouping them based on the start and end times of each stimulus presentation. This allowed us to obtain the Peri-Stimulus Time Histograms (PSTHs) with a bin size of 10 ms. We then used a Gaussian smoothing kernel with a bandwidth of 16ms to smooth the PSTHs for each neuron and each stimulus presentation. The smoothed PSTHs were used as input for inferring the functional connectivity (FC) between neurons for each stimulus presentation. To select the active neurons for each stimulus, we first chose the set of neurons that were active in at least 25\% of the bins in the PSTH, and then collected the unique set of neurons over all stimuli. We found that there were 54, 43, 19, and 36 active neurons for natural scenes, static gratings, Gabor patches, and flashes, respectively, and a total of 68 unique active neurons overall. We separated the entire duration of stimulus presentation to obtain 58 trials of natural scenes, 60 trials of static gratings, 58 trials of Gabor patches, and 3 trials of flashes, where each trial lasted for 7.5 s.

\subsection{Description of the Stimuli}\label{datadescription}
 In this section, we give a detailed description of each of the four stimuli considered in the neuropixels dataset.
\begin{enumerate}
    \item \textit{Natural scenes}, consisting of 118 natural scenes selected from three databases (Berkeley Segmentation Dataset, van Hateren Natural Image Dataset, and McGill Calibrated Colour Image Database) as one of the stimuli. Each scene is displayed for 250ms, after which it is replaced by the next scene in the set. The scenes are repeated 50 times in a random order with blank intervals in between.
    \item \textit{Static gratings} are full-field sinusoidal gratings with 6 orientations, 5 spatial frequencies, and 4 phases, resulting in 120 stimulus conditions. Each grating lasts for 250ms and then replaced with a different orientation, spatial frequency, and phase condition. Each grating is repeated 50 times in random order with blank intervals in between.
    \item \textit{Gabor patches}, for which the patch center is located at one of the points in a 9 × 9 visual field and three orientations are used. Each patch is displayed for 250ms, followed by a blank interval, and then replaced with a different patch. Each patch is repeated 50 times in random order with intermittent blank intervals.
    \item \textit{Full-field flashes}, that last for 250 ms and are followed by a blank interval of 1.75 s before the next flash appears. This sequence is repeated a total of 150 times.
\end{enumerate}

\end{document}